\newcommand\blfootnote[1]{%
  \begingroup
  \renewcommand\thefootnote{}\footnote{#1}%
  \addtocounter{footnote}{-1}%
  \endgroup
}
\newcommand{\D}{\mathcal{D}}
\newcommand{\di}{\,\mathrm{d}}
\newcommand{\E}{\mathcal{E}}
\newcommand{\ground}{\mathrm{gt}}
\newcommand{\R}{\mathbb{R}}
\newcommand{\Range}{\mathrm{Range}}
\renewcommand{\S}{\mathcal{S}}
\newcommand{\sampled}{\mathrm{c}}
\newtheorem{theorem}{Theorem}
\newtheorem{definition}{Definition}
\newtheorem{lemma}{Lemma}
\title{Selling Data at an Auction under Privacy Constraints}
\author{ {\bf Mengxiao Zhang} \\
Business School \\
The University of Auckland\\
Auckland, NZ \\
\And
{\bf Fernando Beltran}  \\
Business School \\
The University of Auckland          \\
Auckland, NZ \\
\And
{\bf Jiamou Liu}   \\
School of Computer Science \\
The University of Auckland \\
Auckland, NZ   \\
}
\begin{document}

\maketitle

\begin{abstract}
Private data query combines mechanism design with privacy protection to produce aggregated statistics from privately-owned data records. The problem arises in a data marketplace where data owners have personalised privacy requirements and private data valuations. We focus on the case when the data owners are {\em single-minded}, i.e.,  they are willing to release their data only if the data broker guarantees to meet their announced privacy requirements. For a data broker who wants to purchase data from such data owners, we propose the SingleMindedQuery (SMQ) mechanism, which uses a reverse auction to select data owners and determine compensations.  
SMQ satisfies interim incentive compatibility, individual rationality, and budget feasibility. 
Moreover, it uses {\em purchased privacy expectation maximisation} as a principle to produce accurate outputs for commonly-used queries such as counting, median and linear predictor.  The effectiveness of our method is empirically validated by a series of experiments. 
\end{abstract}

\section{Introduction}
%``The world's most valuable resource is no longer oil, but data'' as proclaimed by The Economist in 2017\footnote{``The world's most valuable resource is no longer oil, but data.'' The Economist: New York, NY, USA (2017).}. 
The increasing reliance on data-driven technologies has led to the formation of an economy that is built on data trading\blfootnote{The paper has been accepted at UAI 2020.}. 
Many {\em data marketplaces} emerged that bring {\em data buyers}, i.e., those who are seeking data to purchase, together with {\em data owners}, i.e., those who are willing to release data for a compensation. Examples of data marketplaces include Datacoup, Datum, CitizenMe and DataWallet, many of which enable data buyers to purchase personal data from individual data owners  \cite{laudon1996markets,nget2017balance}. 
%For example, Datacoup's beta trial compensates each user with 8 dollars per month in exchange for the right to access their online 
%social network and credit card transaction 
%data \cite{simonite2014sell}. 

Imagine a {\em data broker} who would like to query a set of privately held data records -- such as income record, energy consumption data, or online service rating -- to produce aggregated statistics. %Such data records can be, e.g., income record, energy consumption data, or online service rating, etc., and the data broker can either be an analyst or someone who sells information to analysts, etc.. 
This may mean that consents  must be purchased from individual data owners to access their data. %Obtaining such consent may involve paying a certain compensation to the data owner. 
In designing a query mechanism for this task, the data broker faces a number of challenges: 
The first is {\em privacy}. By giving out their data, the data owners give up a certain privacy. An attacker may use, say, income data to infer confidential information such as personal identities \cite{brankovic1999privacy}. It thus makes sense for data owners to demand privacy protection when their data are released. %As some data owners are more concerned about privacy disclosure than others, setting a uniform level of privacy protection for all data owners often fail to reassure some individuals while overprotecting others \cite{berendt2005privacy}. Hence a critical task for the data broker is to understand and meet the personalised privacy demands at the data owners' requests.
Secondly, every data owner associates a value to their data. The value denotes the minimum compensation for the individual to release the data. 
%, subject to the protection of the data owner's privacy. 
Naturally, this value is hidden from the data broker. The imbalance in the access to data valuation between data owners and the data broker, i.e., {\em information asymmetry}, is the second challenge faced by the data broker. %To solve this challenge, the procurement mechanism needs to incentivise the data owners so that their truthful valuations are revealed. 
Thirdly, when using compensations to incentivise the data owners to release their data, decisions need to be made regarding which (and how many) data owners to procure data from, subject to budget constraints. 
%Unlike a conventional market where the utility of a bundle of goods is the summed utility of goods in the bundle, 
The utility to the data broker is determined by the accuracy of queries made on the purchased data records. The third challenge thus lies in {\em optimising accuracy}. %, which makes data trading different from a traditional market where the utility of a purchased bundle is usually the aggregation of the utility of all purchased items. %Resolving this challenge requires identifying the relationship  between query accuracy and the level of privacy protection, as well as the number of  data owners to be selected by the \textcolor{red}{procurement} mechanism.  
These challenges make private data query an attractive topic that sits between data analysis, algorithmic game theory, and data security.

A paradigm of {\em privacy trading} can be found in Ghosh and Roth \citeyear{ghosh2011selling}. The premises involve a data broker who aims to query private data held by individuals who have varying attitudes towards  privacy. To reveal the hidden privacy attitudes of data owners, the data broker uses an auction where each data owner submits a bid reflecting their privacy valuation. Based on all bids received, the data broker decides on a level of privacy to be purchased from the data owners. A noisy query output is then produced which preserves the purchased level of privacy. %By viewing the problem as privacy trading, 
The paradigm laid out in \cite{ghosh2011selling} implicitly makes two assumptions: (1) privacy trading may be done in ways that resemble the trading of other commodities such as stocks and bonds; and (2) the data broker, by giving sufficient incentive, could purchase an arbitrary amount of privacy from every data owner. 
Hence it is up to the data broker to determine the level of privacy to be purchased from the data owners. A number of work have subsequently adopted these views in other settings  \cite{dandekar2012privacy,fleischer2012approximately,ligett2012take,cummings2015accuracy}.

{\bf Contribution.} In this paper, we adopt a different view towards private data query. Under the assumptions of privacy trading, as the data broker applies  more stringent methods to protect data privacy, the cost (and therefore payment) to the data owner would tend towards 0. This is often not the case in reality. In many situations, the data owners are ``single-minded'', i.e., they would demand a level of privacy protection and would not release their data {\em unless their requested level of privacy protection is met}. Moreover, once the level of privacy protection is guaranteed, privacy is decoupled from the cost of releasing the data. In other words, the payment demanded by a data owner to release their data would not decrease as the data broker provides more privacy protection to the data. This paradigm can be considered as {\em data trading} as the purchased commodity is the access to data rather than privacy. To our knowledge,  despite its simple and relatively straightforward setup, no work has addressed private data query under this assumption. This is the major conceptual contribution of our work. 

%{\bf Contribution.} 
Our technical contribution lies in  {\em SingleMindedQuery (SMQ)}, a new private data query mechanism over single-minded data owners as described above.  %{\em Personalised differential privacy} (PDP) provides a natural and formal metric that quantifies data privacy and facilitates its preservation \cite{jorgensen2015conservative}; we adopt this notion in the paper. 
We adopt the personalised variant of the classical {\em differential privacy} (DP) to quantify data privacy. 
%Assuming that each data owner reveals her privacy requirement to the data broker, 
Our goal is to perform queries to a high accuracy while preserving every data owner's declared privacy requirement.  The main technical novelty is a reverse auction mechanism  that determines whose data to purchase and how much compensation should be paid. Note that unlike mechanisms that follow the privacy trading paradigm \cite{ghosh2011selling}, SMQ does not decide the level of privacy to be purchased from the data owners. SMQ incentivises the data owners so that their data valuation is truthfully revealed, thereby resolving information asymmetry.  Furthermore, we use {\em purchased privacy expectation maximisation (PPEM)} as a principle that guides SMQ to achieve high accuracy for commonly-used queries such as counting, median and linear predictor (See Lemma~\ref{accuracy}). We validate empirically the effectiveness of  our method using a series of experiments.

%To highlight the technical novelty of our work, we draw a comparison between SMQ and existing work that also address mechanism design for private data trading. The problem can be found in the seminal work of Ghosh and Roth \cite{ghosh2011selling}. The premises of \cite{ghosh2011selling} involve a data broker who aims to purchase private data records from data owners with varying attitudes towards data privacy. To reveal the hidden privacy attitudes of each data owner,  the data broker uses an auction that asks each data owner to submit a bid reflecting their privacy valuation. Based on all bids received, the data broker decides on a level of privacy to be purchased from each  data owner. A noisy query output is then produced which preserves the purchased level of privacy. This framework has been subsequently uptaken by \cite{dandekar2012privacy} to handle linear predictor queries in recommender systems.  The mechanisms above do not apply in our setting. In particular, by viewing data trading as synonym with {\em privacy trading}, these work implicitly require two facts: (1) privacy trading may be done in ways that resemble the trading of other commodities such as stocks and bonds; and (2) the data broker, by giving sufficient incentive, could purchase an arbitrary amount of privacy from every data owner. 

%\section{Related Work} 
{\bf Related work.} We review research on the query of private data. 
%The main assumptions of privacy trading  have been laid in the seminal work of Ghosh and Roth \citeyear{ghosh2011selling}. 
The seminal work of Ghosh and Roth \citeyear{ghosh2011selling} has laid down some main assumptions. 
%In \cite{ghosh2011selling}, 
The authors propose the FairQuery (FQ) mechanism to perform count query on binary (0/1-valued) data. FQ uses a reverse auction to obtain data owners' privacy valuation. When combined with a Laplace mechanism that obfuscates query output, FQ guarantees (certain exact formulations of) incentive compatibility, individual rationality, budget feasibility, query accuracy, and privacy protection.  
%\begin{itemize}
%
%\noindent $\circ$ {\em incentive compatibility}, i.e., the data owners behave truthfully, \\
%\noindent $\circ$ {\em individual rationality}, i.e.,  the data owners are willing to participate in the market, \\
%\noindent $\circ$ {\em budget feasibility}, i.e.,  the total payment does not exceed a pre-defined budget, \\
%\noindent $\circ$ {\em privacy}, i.e., the query mechanism meets the purchased level of (differential) privacy, and \\
%\noindent $\circ$ {\em accuracy}, i.e., certain theoretical guarantee on how far the query output is from the true value. 
%
These conditions have since been key indicators of the effectiveness of any query mechanism for private data. The notion of {\em $\varepsilon$-differential privacy (DP)} \cite{dwork2006calibrating} has been chosen to quantify data privacy as the parameter $\varepsilon$ captures in some precise sense the loss on utility a person experiences if her data is used in an $\varepsilon$-DP manner. 
This supports an argument that a data owner's privacy can be regarded as a ``soft constraint'' to be captured by a real-valued cost. The cost increases as a linear function in terms of privacy loss. The FQ mechanism in \cite{ghosh2011selling} associates this cost with the compensation paid to the data owner. 

It is important to point out that, even though the data owners have different privacy valuations, FQ would compute a single level $\varepsilon>0$ of DP, and compensate for $\varepsilon$-DP to {\em all} data owners whose data are used. This means that the mechanism could over-protect some data owners by offering them too stringent privacy protection. Therefore, we can consider this mechanism as ``pseudo-personalised'' as it fails to account for the differences in the data owners' privacy valuations.  Such pseudo-personalised approach has been uptaken by Fleischer and Lyu \citeyear{fleischer2012approximately} and Ligett and Roth \citeyear{ligett2012take} which %extend the study of \cite{ghosh2011selling} and 
%who consider the situation %problem of trading private data 
%where a data owner's privacy risks arise not only from releasing their private information, but also from releasing their privacy attitudes (i.e., the so-called ``sensitive value model'' in \cite{ghosh2011selling}). For example, there could be a correlation between an individual' private data and her privacy valuation. I
instead of proposing  auction mechanisms, %these work 
design indirect mechanisms, i.e., take-it-or-leave-it offers, to reveal data owners’ privacy valuation. 

%Due to the correlation, the selection decision may reveal some private information about data. \cite{fleischer2012approximately} and \cite{ligett2012take} design two indirect mechanisms, i.e., take-it-or-leave-it offers, to solve this problem. \cite{fleischer2012approximately} assume that the valuation distributions are different for data owners with different data value and they are known to all participants. The mechanism guarantees that the probabilities of accepting the offer are the same for data owners with different data value. In such a way, no one can tell the data value by the participation decisions. On the other hand, in \cite{ligett2012take}, there is no assumption about the valuation distribution. The mechanism incorporates two algorithms that are used to preserve the privacy of data and participation decisions, respectively. However, the mechanism is not incentive compatibility. 

%Nevertheless, all of the studies guarantee a uniform differential privacy for all selected data owners. %In addition, the incentive mechanisms proposed in these three studies are applicable only when setting prices for binary data. Because of the similar setting in \cite{ghosh2011selling}, we will use FQ as a baseline in the experiments.

%In general, the pseudo-personalised approaches use Laplace noise on the query result to achieve the desired level of DP. While this method is amenable to rigorous analysis, it cannot be used towards ``truly-personalised'' privacy trading, where the levels of privacy purchased from data owners may be different when their private data are used. 
In contrast, a ``truly-personalised'' private data query mechanism enables different levels of privacy requirements to be applied to different data owners. Such a mechanism has the potential to avoid over-protecting the data owners' privacy, thus achieving a higher accuracy. 
For example, Dandekar et al. \citeyear{dandekar2012privacy} design the FairInnerProduct (FIP) mechanism for linear predictor queries over real-valued data. FIP also uses a reverse auction mechanism while 
enforcing different levels of privacy protection for different data owners. 
%FIP follows an allocation rule similar to \cite{ghosh2011selling}, selecting the data owners with low data valuations. However, in this study, the accuracy is measured by the expected distance between the query output and the true value. 
%The mechanism is dominant-strategy incentive compatible, ex-post individual rational, ex-post budget balanced and personalised differentially private.
%
Such an approach is also adopted by \cite{cummings2015accuracy}. There, a data broker provides a menu of different variance levels and asks the data owners to report the valuation under each level. 
%The broker wants to compute an aggregate estimate, which is the sum of the estimates that have variances at some predefined levels. They combine a lottery allocation rule and VCG payment rule.
%The mechanism is dominant-strategy incentive compatible, ex-post individual rational and satisfies a predefined accuracy requirement and different variance levels.
However, a crucial limitation exists in \cite{cummings2015accuracy}'s mechanism as it lacks theoretical guarantee of differential privacy. %Besides, these two mechanisms can be used to pricing the calculated result of the linear predictor. There is currently no studies that consider the problem of incentive mechanism design for pricing general queries, such as count, max/min, and median in the context of personalised differential privacy. We will compare the performance of FIP with our agent for linear predictor later. 

As mentioned earlier, all mechanisms above follow the privacy trading paradigm where the cost to a data owner from releasing their data is assumed to only arise from privacy loss. This makes sense assuming (1) the data owners fully trust the data broker to protect their purchased privacy level, and (2) the data owners do not have intrinsic valuation to their data records. These assumptions may not hold in practice. In the light of this, we  will put forward a mechanism that compensates the data owners for their intrinsic data valuation while treating the privacy requirement as a hard constraint. 

%In the above-mentioned studies, there is an assumption that data owners do not have intrinsic privacy requirements and thus the data broker is able to buy an arbitrary amount of privacy from them. When the data owners are single-minded regarding privacy requirements, the results are not applicable. 

\section{Preliminaries} \label{sec:prelim}
%
%data trading between a data broker and a group of data owners. The data broker gets the data request, or a query, such as the number of consumers interested in a certain product, from a data buyer. At the same time, a data buyer usually has a budget for data collection. The data broker will collect on behalf of the data buyer and return query answer to the data buyer.
%In the following, we formulate the data broker problem and the privacy-aware data broker problem.
%We firstly introduce the concept of personalised differential privacy and personalised differentially private mechanisms. On the other hand, we introduce the basics of procurement mechanisms. 
%This section introduces the basic notions and tools for the paper.  
\paragraph*{\bf PDP Queries.}%\label{sec:PDP}
We regard a {\em dataset} as a tuple $\vec{d}=(d_1,\ldots,d_n)\in \R^n$ where $n\geq 1$ and each {\em data entry} $d_i\in \R$. 
 $\D\subseteq \R^\star$ denotes the collection of all possible datasets.  A {\em query} is a function $\varphi\colon \D\to \R$, % that maps a dataset to a real-valued output, 
 such as median and mean. To achieve privacy protection, a randomised function $g$ is applied to the query result to obtain the obfuscated query  $\Phi=g\circ \varphi$. 
%A {\em (privacy-preserving) query mechanism} $\Phi$ is a randomised function that outputs a real value given a dataset $\vec{h}$.  %to some output space $\Output$, i.e., $M\colon \R^n \times \Omega \to \Output$, where $\Omega$ is a probability space.
%

The notion of {\em $\varepsilon$-personalised differential privacy} (PDP) quantifies the level of privacy achieved by this randomised function: Call two datasets $\vec{d}\in \R^n$ and $\vec{d}'\in \R^n$ {\em $i$-neighbouring} if they differ on exactly the $i$th entry. %; call them {\em neighbouring} if they are $i$-neighbouring for some $i\in \{1,\ldots,n\}$.
\begin{definition}\cite{jorgensen2015conservative} Given a vector $\vec{\varepsilon}=(\varepsilon_1,\ldots,\varepsilon_n)\in \R^n$, a randomised function $\Phi\colon \D\to \R$ is {\em $\vec{\varepsilon}$-PDP} if for any pair of $i$-neighbouring datasets  $\vec{d}, \vec{d}' \in \R^n$ where $1\leq i\leq n$:
\begin{equation}\label{eqn:PDP}
  \frac{\Pr(\Phi(\vec{d}) \in R)}{\Pr(\Phi(\vec{d}') \in R)} \leq e^{\varepsilon_i}, \forall R \subset \Range(\Phi)
\end{equation}
\end{definition}
In other words, suppose $\vec{d}'$ is an $i$-neighbouring dataset from the true dataset $\vec{d}$. As the ratio above moves closer to $1$, $\Phi$ is more likely to output the same result on $\vec{d}$ and $\vec{d}'$, hiding the true value of the $i$th data entry. Hence a $\varepsilon$-PDP query mechanism with smaller $\varepsilon$ leads to a higher level of privacy  protection for data entry $d_i$\footnote{PDP is generalised from the classical differential privacy \cite{dwork2006calibrating} to accommodate the diversity in people's privacy attitudes \cite{acquisti2005privacy,berendt2005privacy}.}.

The $P\E$ mechanism generates $\vec{\varepsilon}$-PDP queries \cite{jorgensen2015conservative}: 
For $\vec{d}',\vec{d}\in \R^n$,
let $I_{\vec{d} \oplus \vec{d}'} \coloneqq \{1\leq i\leq n\mid d'_i\neq d_i\}$. 
Fix a query $\varphi$. Set $\sigma_{\varphi}(\vec{d},r)\coloneqq \max_{\varphi(\vec{d}')=r} \left\{\sum_{i\in I_{\vec{d} \oplus \vec{d}'}} -\varepsilon_i\right\} \forall r\in \R$. % \Range(\Phi_{\varphi})
%\label{score}
%\end{equation}%$\sigma_{\varphi}(\vec{d},r)$ is the score of a certain query answer $r$, and it is the maximum sum of $-\varepsilon_i$ which makes the output to be $r$. 
Given a dataset $\vec{d}$, the {\em P$\E$ mechanism} $\Phi_{\varphi}(\vec{d})$ generates output $r\in \R$ with  probability
\[
\Pr\left(\Phi_{\varphi}(\vec{d})=r\right)=\frac{\exp(\frac{1}{2} \sigma_{\varphi}(\vec{d}, r))}{\sum_{r'\in \Range(\Phi)} \exp(\frac{1}{2} \sigma_{\varphi}(\vec{d},r'))}
\]
%In other words, $P\E$ mechanism outputs a query answer $r \in  \Range_{\varphi}(\vec{d})$ with with probability $\Pr(\Phi_{\varphi}(\vec{d})=r)$, which is associated with privacy parameters.
%This mechanism satisfies $\vec{\varepsilon}$-PDP for any $\vec{d}$ and $\vec{\varepsilon}$ \cite{jorgensen2015conservative}.
%
We will implement our query using the $P\E$ mechanism, as it can be applied to arbitrary real-valued queries and adds a relatively smaller amount of random noise as compared with other existing methods who claim to achieve $\vec{\varepsilon}$-PDP \cite{alaggan2015heterogeneous,li2017partitioning}.

\smallskip

\noindent {\bf Procurement mechanism. }%\label{sec:procurement}
%Firstly, we
%Consider the problem of a data broker who has a data query and a budget for data collection. S/he may purchase data from multiple data owners. The goal of the data broker is to design a procurement mechanism that selects a subgroup of data owners and decides on the amount of compensation for each data owner, given a budget.
%This section introduces some background of mechanism design that is necessary for the paper. 
We consider a market that consists of a single buyer and multiple sellers denoted by $s_1,s_2,\ldots,s_n$. The following assumptions are made on every seller $s_i$, $1\leq i\leq n$: 
%Following the convention of mechanism design studies, we make assumptions regarding a seller $s_i$ as follows\textcolor{red}{\footnote{Here, we describe sellers in a general sense. In Section \ref{sec:problem}, we will introduce more characteristics of data owners as sellers.}}:

\smallskip

\noindent {\bf (A1)} We assume that once an appropriate amount of compensation is given, $s_i$ is willing to sell her good to the buyer. The required level of compensation depends on the inherent valuation $\theta_i$ that $s_i$ puts on the good. This is a real value in the range $\Theta\coloneqq[\underline{\theta},\overline{\theta}]$ where constants $0\leq \underline{\theta}\leq \overline{\theta}$ are the lower- and upper-bound, respectively; $\theta_i$ represents the loss $s_i$ suffers when she sells the good. 

\noindent {\bf (A2)} The valuation $\theta_i$ of $s_i$ is a random sample from a distribution with cumulative probability function $F_i$ and density function $f_i$. The distribution is assumed to be {\em regular}. In other words, the function $f_i(v)/(1-F_i(v))$, i.e., the probability that $\theta_i=v$ conditioned on $\theta_i>v$, is monotonically non-decreasing on $v\in \Theta$. This assumption is commonly made in mechanism design literature and is satisfied by most distributions \cite{borgers2015introduction}.

We further assume that $F_1=\cdots =F_n$ and $\theta_1,\ldots,\theta_n$ are i.i.d. random variables. The  {\em valuation vector} $\vec{\theta}\coloneqq (\theta_1, \ldots, \theta_n) \in \Theta^n$ has joint distribution and density functions  $F$ and $f$, respectively. While the functions $F$ and $f$ are common knowledge among the buyer and sellers, the valuation $\theta_i$ is only known by $s_i$ and hidden from anyone else. Therefore, it is crucial for the buyer to incentivise the sellers to reveal their true valuations.

%
%\textcolor{red}{For instance, a research often has a fixed budget for data collection.} Hence the broker may not get sufficient fund to purchase data entries from all data owners. 
A {\em procurement mechanism} acts on behalf of the buyer to select  %, for a given query and budget,
 a subset of sellers and decides on the amount of compensation for each seller. A direct mechanism, defined below, is a form of procurement mechanism where the buyer makes decisions solely based on the sellers' reported valuations \cite{borgers2015introduction}:

\begin{definition}
A {\em direct mechanism} $\Psi$ consists of a pair of functions $(q,p)$ where $q\colon \Theta^n \to \{0,1\}^n$ is called {\em allocation function} and $p\colon \Theta^n \to \R^n$ is called {\em payment function}. For any $\vec{\psi}\in \Theta^n$, the tuples $q(\vec{\psi})\coloneqq (q_1(\vec{\psi}),\ldots, q_n(\vec{\psi}))$ and $p(\vec{\psi})\coloneqq (p_1(\vec{\psi}), \ldots, p_n(\vec{\psi}))$ are called {\em allocation vector} and {\em payment vector}, resp.
\end{definition}
Intuitively, the buyer  first receives reported valuation $\psi_i\in \Theta$ from each $s_i$ to form a vector $\vec{\psi}=(\psi_1,\ldots,\psi_n)$. The mechanism computes  $q_i(\vec{\psi})$ and $p_i(\vec{\psi})$.  When $q_i(\vec{\psi})=1$, the buyer ``selects'' $s_i$ and purchases the good from $s_i$ with a compensation $p_i(\vec{\psi})$. %The {\em utility} for each $s_i$ is then the difference $u_i(\vec{\psi})\coloneqq p_i(\vec{\psi})-\theta_i q_i(\vec{\psi})$. 
 
We need a procurement mechanism that leads to certain desirable actions of the sellers. At the time of submitting a valuation $\psi_i$, the seller $s_i$ makes decision with only the private information $\theta_i$ and knowledge regarding the distribution $F$.  In other words, the outcome of the mechanism is made based on {\em ex-interim} utility expectations \cite{conitzer2009prediction}. It therefore makes sense to adopt %the slightly weaker conditions of 
{\em Bayesian Nash} utility in the solution concept of mechanism design \cite{mas1995microeconomic}: % incentive compatibility and interim rationality. 
%In particular, we would like the mechanism to satisfy the following. 

Set $\vec{\psi}_{-i}\coloneqq(\psi_1,\ldots,\psi_{i-1},\psi_{i+1},\ldots,\psi_n) \in \Theta^{n-1}$ as the reported valuation vector of the sellers other than $s_i$. We abuse the notation writing $p_i(\psi_i,\vec{\psi}_{-i})$ for $p_i(\vec{\psi})$ and $q_i(\psi_i,\vec{\psi}_{-i})$ for $q_i(\vec{\psi})$. Set $f_{-i}$ as the density function of the joint probability distribution of $(\theta_1,\ldots,\theta_{i-1},\theta_{i+1},\theta_n)$.
We define 
\begin{equation}
Q_i (\psi_i )\coloneqq \int_{\Theta^{n-1}} q_i(\psi_i,\vec{\psi}_{-i}) f_{-i} (\vec{\psi}_{-i}) \di\vec{\psi}_{-i} 
\label{Qi}
\end{equation}
\begin{equation}
P_i (\psi_i)\coloneqq\int_{\Theta^{n-1}} p_i(\psi_i,\vec{\psi}_{-i}) f_{-i} (\vec{\psi}_{-i} ) \di\vec{\psi}_{-i} 
\label{Pi}
\end{equation}
as the {\em expected allocation} and the {\em expected payment} when the reported valuation of $s_i$ is $\psi_i$, resp. And her {\em expected utility} is
$U_i (\psi_i | \theta_i)\coloneqq P_i (\psi_i )-\theta_i Q_i (\psi_i)$.

%For solution concept, we adopt the {\em Bayesian Nash equilibrium} where each participant maximises her expected utility  (see \cite{mas1995microeconomic}). 
The celebrated {\em revelation principle} asserts that to find the optimal procurement process, it is sufficient to restrict to direct mechanisms where the data owners truthfully report their valuation in the Bayesian Nash equilibrium (see \cite{borgers2015introduction}). Formally, we would like to design a direct mechanism $\Psi$ with the following properties:

\smallskip

\noindent {\bf (1) Incentive compatibility (IC):} This property ensures that each seller truthfully reports her valuation, as she expects to gain the maximum utility by doing this, i.e.,
%To reveal the true valuations, the mechanism should make sure that each data owner can gain maximum utility when reporting the valuation truthfully, or satisfy incentive compatibility. If telling truth is a Bayesian Nash equilibrium for every data owner, the mechanism is called incentive compatible, i.e.,
\begin{equation}\label{eqn:IC}
U_i (\theta_i|\theta_i) \geq U_i (\psi_{i}|\theta_i), \forall i \in \{1, \ldots, n\}, \forall \theta_i ,\psi_{i} \in \Theta
\end{equation}
\noindent {\bf (2) Individual rationality (IR):} This property ensures that every seller is willing to participate in the mechanism, as her gain of participating is not less than that of non-participation.
 %the mechanism is called individual rational.
 Here, we assume that the utility of non-participation is zero, i.e.,
 %Therefore, we should make sure that the utility should be at least zero. We say the mechanism is individually rational, if for every data owner, the utility of participation is non negative, i.e.,
\begin{equation}\label{eqn:IR}
U_i (\theta_i |\theta_i) \geq 0 , \forall i \in \{1, \ldots, n\}, \forall \theta_i \in \Theta
\end{equation}

It is also reasonable to assume that the buyer has a limited budget $B \leq \overline{\theta}n$, 
%For instance, a researcher often has a fixed budget for procurement. 
and thus we need the following. 

\noindent {\bf (3) Budget feasibility (BF):}
The expected compensation received by all sellers should not exceed budget $B$, i.e., %Therefore, we need to add the budget feasibility constraint. That is,
\begin{equation}\label{eqn:BF}
%\sum_{i=1}^n P_i(\theta_i) \leq B%
\sum_{i=1}^n\int_{\Theta^n} p_i(\vec{\psi}) f(\vec{\psi})\di\vec{\psi} \leq B
\end{equation}

{\em Remark:} The definition of BF here is {\em interim BF}, which merely considers the expected value of total payment. This may seem counter-intuitive -- as the buyer would really like to ensure ex post BF, i.e., \(\sum_{i=1}^{n}p_i(\vec{\psi}) \leq B\). It has turned out that these two notions are equivalent, as shown by the following lemma.%We just consider interim BF in this study by Lemma \ref{ex}. 
\begin{lemma}\cite{borgers2015introduction}
For any direct mechanism that is individually rational, incentive compatible and interim budget feasible, there is a direct mechanism with the same allocation rule that is individually rational, incentive compatible and ex post budget feasible.
\label{ex}
\end{lemma}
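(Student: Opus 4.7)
The plan is to hold the allocation $q$ fixed and construct a new payment rule $p'$ that (i) reproduces the interim expected payments implied by $q$ under IC and IR, and (ii) is pointwise bounded by $B$.

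First I would pin down the structure of $P_i(\psi_i)$ by the envelope argument: IC implies $d U_i(\theta_i|\theta_i)/d\theta_i = -Q_i(\theta_i)$, so
\[P_i(\theta_i)=\theta_i Q_i(\theta_i)+\int_{\theta_i}^{\overline{\theta}} Q_i(t)\di t+U_i(\overline{\theta}|\overline{\theta}),\]
and IR forces $U_i(\overline{\theta}|\overline{\theta})\ge 0$. Setting the boundary constant $U'_i(\overline{\theta}|\overline{\theta})=0$ yields the pointwise smallest interim payment $P'_i$ consistent with IC and IR, so $\sum_i\int_\Theta P'_i(\theta)f_i(\theta)\di\theta\le\sum_i\int_\Theta P_i(\theta)f_i(\theta)\di\theta\le B$ by the interim BF of the original mechanism. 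These marginals can be realised, e.g., by the Myerson threshold payment $p'_i(\vec{\psi})=q_i(\vec{\psi})\tau_i(\vec{\psi}_{-i})$ with $\tau_i(\vec{\psi}_{-i}):=\sup\{t\in\Theta\mid q_i(t,\vec{\psi}_{-i})=1\}$; monotonicity of $Q_i$ (another consequence of IC) implies that the resulting mechanism is IC, ex-post IR, and reproduces exactly the marginals $P'_i$.

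It remains to turn the expected-sum bound $\sum_i \int_{\Theta^n} p'_i(\vec{\psi}) f(\vec{\psi})\di\vec{\psi}\le B$ into a pointwise bound. I would exploit the following slack: adding any zero-mean function of $\vec{\psi}_{-i}$ to $p'_i(\vec{\psi})$ leaves IC, IR, and $P'_i$ untouched, and therefore lets me smooth the ex-post totals $\sum_i p'_i(\vec{\psi})$ towards their expectation. The hard step, which I expect to be the technical core of the argument, is showing that the pointwise inequality $\sum_i p'_i(\vec{\psi})\le B$ can be met simultaneously across all coordinates without breaking the marginal constraints $\int_{\Theta^{n-1}} p'_i(\psi_i,\vec{\psi}_{-i}) f_{-i}(\vec{\psi}_{-i})\di\vec{\psi}_{-i}=P'_i(\psi_i)$. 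This is a convex-feasibility (Border-style) problem whose solvability should follow from the budget slack produced in the previous step.
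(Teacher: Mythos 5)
Your overall strategy is on the right track, and note first that the paper itself offers no proof of this lemma --- it is imported verbatim from B\"{o}rgers' textbook --- so the comparison is with the standard argument rather than with anything in the paper. The sound core of your proposal is the observation that IC and (interim) IR constrain only the interim quantities $Q_i$ and $P_i$, so the allocation can be held fixed while the ex-post payments are redistributed freely subject to preserving each $P_i(\psi_i)$. The first half of your write-up (the envelope formula, the choice $U_i(\overline{\theta}\,|\,\overline{\theta})=0$, the Myerson threshold payments) is correct but unnecessary: the lemma only asks for the \emph{same allocation rule}, so you may work with the original interim payments $P_i$ directly; minimising them merely buys slack you will not need.

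The genuine gap is that you stop at exactly the step that carries the entire content of the lemma, and you point at the wrong tool for it. There is no Border-type feasibility problem to solve: the pointwise bound is achieved by an explicit, one-line ``expected externality'' (AGV-style) transfer. For $n\ge 2$, set
\[
p'_i(\vec{\psi})\coloneqq P_i(\psi_i)-\frac{1}{n-1}\sum_{j\neq i}\bigl(P_j(\psi_j)-\mathbb{E}[P_j(\theta_j)]\bigr).
\]
Replacing $p_i(\vec{\psi})$ by its interim expectation $P_i(\psi_i)$ and then adding a zero-mean function of $\vec{\psi}_{-i}$ both leave every interim expected payment unchanged, so IC and interim IR are inherited from the original mechanism, and the allocation rule is untouched. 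Summing over $i$, the correction terms telescope and
\[
\sum_{i=1}^n p'_i(\vec{\psi})=\sum_{i=1}^n\mathbb{E}[P_i(\theta_i)]\le B
\]
for every $\vec{\psi}$, the last inequality being exactly interim BF. The ex-post total is therefore constant and the bound is immediate; no convex-feasibility argument, and in particular no Border-style reduced-form machinery (which concerns implementability of interim \emph{allocation} probabilities, not payments), is needed. As written, ``solvability should follow from the budget slack'' is an unproven assertion standing in for the whole proof; with the display above it becomes a two-line verification. (One caveat your argument shares with the statement itself: for $n=1$ there is no one to smooth over and the claim can fail, so $n\ge 2$ is implicitly assumed.)
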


%\section{Problem formulation} \label{sec:problem}

\section{Privacy-aware data owners}
We consider a data marketplace that involves a group of {\em data owners}, each of whom holds a private (real-valued) data entry, and a  {\em data broker} who would like to collect these data entries. Denote the data owners by $s_1,\ldots,s_n$ and $s_i$'s data entry by $d_i$. Every $s_i$ has a {\em data valuation} $\theta_i\in \R$ for her data entry. In this way, the data owners and broker are respectively the sellers and buyer in the market with data entries as goods.
%on behalf of a third party. The third party can be a data buyer who demands the result of a query from the data. Examples of such queries include, e.g., count the number of patients with a specific symptom or the median spending of shoppers at an online store. We next develop the formal framework for this scenario.
%
%\subsection{Privacy-aware data marketplace}
%\textcolor{red}{
%We model the data marketplace with multiple privacy-aware data owners, each of whom is associated with a data entry and a specific privacy requirement, and a budget-constrained data broker in this subsection.
%In addition to assumptions made in Section~\ref{sec:prelim}, %the heterogeneous privacy requirements and assumptions (A1) and (A2), 
We make some further assumptions regarding every data owner $s_i$:

\smallskip

\noindent{\bf(A3)} $s_i$ is a {\em single-minded} data owner, i.e., the data owner $s_i$ has a {\em privacy requirement} $\varepsilon_i\in \R$. %and demands privacy protection at level $\varepsilon_i$ for any use of their data. More specifically, 
To release the data, $s_i$ requires the data broker to meet $\varepsilon_i$-PDP for any query made on the collected dataset.

%That is, only if her privacy requirement is satisfied, the data owner is willing to share her data; otherwise, no matter how much compensation is provided, she will never sell her data. Data owner $s_i$ specifies her privacy requirement as $\varepsilon_i$.

\noindent {\bf (A4)} The data entry $d_i$, once released, is verifiable and thus data owners cannot misreport their data. This assumption is reasonable because in some data marketplaces such as Datacoup, the data brokers do not directly collect data from data owners, but rather, they seek for data access permission from data owners and the data are provided by certain intermediary service. % such as financial institutions and hospitals.

\noindent{\bf (A5)} No correlation exists between $\theta_i$ and the data value $d_i$. This means that the output of the  allocation function does not reveal any information about the private data. 

%Therefore, we use $d_i$ and $\theta_i$ to capture a data owner.
 (A1)--(A5) naturally infer the following definition:
 %mean that a data owner can be captured by $d_i$ and $\theta_i$.
\begin{definition}
A {\em privacy-aware data owner} $s_i$ is formally a tuple $s_i\coloneqq(d_i,\theta_i,\varepsilon_i)$, where $d_i\in \R$ is $s_i$'s data entry, $\theta_i \in \Theta$ is $s_i$'s {\em data valuation}, and $\varepsilon_i\geq 0$ is her {\em privacy requirement}.
\end{definition}

{\em Remark:} The definition above is  different from the one in \cite{ghosh2011selling} where instead of the value $\varepsilon_i$, a data owner $s_i$ is associated a {\em cost function} $c_i(\varepsilon)\in \R$ that captures the amount of loss $s_i$ experiences when the data broker ``purchases'' $\varepsilon$ amount of privacy. In particular, the cost function is defined as $c_i(\varepsilon)=\varepsilon\cdot v_i$ where $v_i\in \R$ is a {\em privacy valuation}. When using $c_i$ to denote the minimum compensation $s_i$ requires to release the data entry $d_i$, in our setting, a privacy-aware data owner would have a stepwise cost function defined as
\begin{equation}\label{cost}
c_i(\varepsilon)=
\begin{cases}
\theta_i & \text{ if } 0\leq \varepsilon \leq \varepsilon_i\\
\infty & \text{ otherwise}
\end{cases}
\end{equation}
In particular, $c_i(0)$ may be non-zero as it represents $s_i$'s valuation to the data entry rather than privacy. This crucial difference makes the analysis in \cite{ghosh2011selling} not applicable to our setting.

\section{Data query mechanism}

%The workflow of a {\em data query mechanism} $A$ is illustrated in Figure~\ref{Privacy-aware data broker}. 

A {\em data query mechanism} $A$ combines a procurement mechanism $\Psi$ with a PDP query $\Phi$ (See Fig.~\ref{Privacy-aware data broker}). Suppose $\S=\{s_1,\ldots,s_n\}$ is a set of privacy-aware data owners. Given a query $\varphi\colon\D\to\R$ and a budget $B>0$, the mechanism $A$ first applies $\Psi$ which purchases data entries from a subset of data owners and constructs a sampled dataset;  the PDP query $\Phi$ is then applied to % obfuscate the query output and 
return a final query result $A(\S)$. As argued in Section~\ref{sec:prelim}, we will apply the $P\E$ mechanism as $\Phi$.  
\begin{figure}[h]
  \centering
  \includegraphics[width=0.9\linewidth]{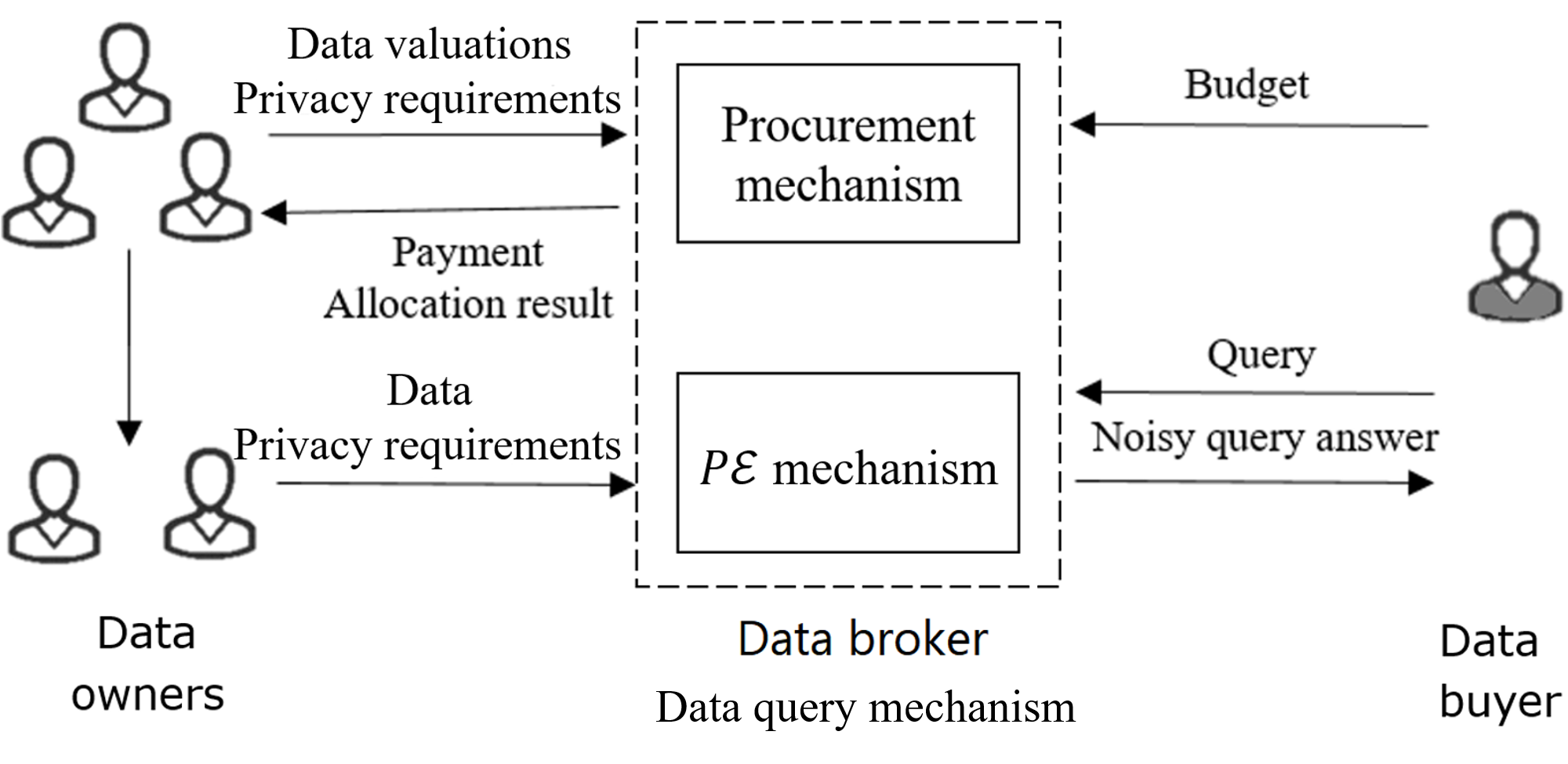}
  \caption{An illustration of a data query mechanism}
  \label{Privacy-aware data broker}
\end{figure}

%\textcolor{blue}{Let $\S\coloneqq \{s_1,\ldots,s_n\}$ denote the set of data owners. A {\em dataset} is a sequence $(d_s)_{s\in \S'}$ where $\S'\subseteq \S$.
%The {\em size} of the dataset is $|\S'|$. 
%If $\S'=\S$, then the dataset is called {\em ground-truth dataset} and is denoted by $\vec{d}_{\ground}$.} %The procurement mechanism forms a dataset from the selected data owners.}

We denote the {\em ground-truth dataset} $(d_1,\ldots,d_n)$ by $\vec{d}_{\ground}$ and the sampled dataset by $\vec{d}_\sampled$. The next definition captures the query accuracy of $A$. 
%The utility of the data broker is measured in terms of how accurate the obfuscated query made on %the sampled dataset 
%$\vec{d}_\sampled$ resembles the (unobfuscated) query result over %the ground-truth dataset 
%$\vec{d}_{\ground}$. %More precisely, the {\em error} of a PQ agent for a query $\varphi$ is $\Err(A)\coloneqq A(\S)-\varphi(\vec{d}_{\ground})$.
%We use the following notion to capture this utility. 

\begin{definition}
A query mechanism $A$ is {\em $(\alpha,\delta)$-probably approximately correct} (PAC) if for any $\vec{d}_\ground$, %the following is satisfied,
$\Pr(|A(\vec{d}_\sampled)-\varphi(\vec{d}_{\ground})|\geq \alpha) \leq 1-\delta$.
\end{definition}
%This constant can be changed to any desired constant between $0$ and $1$ without significantly affecting analysis.  
%The utility of the query mechanism $A$ to the data broker is therefore reflected by the value minimum value $\alpha$ such that $A$ is $(\alpha,2/3)$-PAC. 
%We further require the query mechanism  to satisfy IC, IR, BF and $\vec{\varepsilon}$-PDP where $\vec{\varepsilon}=(\varepsilon_1,\ldots,\varepsilon_n)$ is the privacy requirement vector. 

%Recall that $q_i(\vec{\psi})$ denotes the allocation function for data owner $s_i$ as determined by the procurement mechanism $\Psi$. 
For brevity we write $q_i$ for the allocation result $q_i(\psi_i,\vec{\psi}_{-i})\in \{0,1\}$ of $s_i$ given the reported valuation $\vec{\psi}$. As each data owner is single-minded, $s_i$'s privacy requirement $\varepsilon_i$ must be met in case $q_i=1$. One can thus view $\varepsilon_iq_i$ as the amount of privacy ``purchased'' by the procurement mechanism. We next establishes a connection between the query accuracy %(in terms of PAC) of a query mechanism 
and the total amount  $\sum_{i=1}^n \varepsilon_iq_i$ of  purchased privacy. %bought from all data owners. 

We consider the following commonly-used query functions: {\em Linear predictor} captures a wide range of potential queries over real numbers that include k-nearest neighbours, Nadaranya-Watson weighted average, ridge regression and support vector machines \cite{dandekar2012privacy}. It is defined as $\varphi(\vec{d}_{\ground})\coloneqq \sum_{i=1}^n w_i d_i$, where $w_i\neq 0$ is the weight of the data owner $s_i$. If $d_i\in \{0,1\}$ and $w_i=1$ for all $i\in \{1,\ldots,n\}$, the query is called a {\em count query}. More generally, a linear predictor can be used in a recommender system where $w_i$ represents the similarity between a user $s_i$ and a new user; $\varphi(\vec{d}_{\ground})$ is then the prediction about the new user's data value. We also consider 
{\em median query} which aims to find the median value among {\em mutually distinct positive integers}. In the next lemma, the query $\varphi$ belongs to one of the three types  defined above.

%There are two potential sources for the error of a PQ agent: The first lies within the procurement mechanism as the budget constraint may forbid the agent to purchase data entries from all data owners. In general, a higher number of  data owners selected by the agent during this phase would mean that the query result $\varphi(\vec{d}_{\mathrm{c}})$ over the resulting dataset $\vec{d}_{\mathrm{c}}$ is closer to the ground-truth $\varphi(\vec{d}_{\ground})$. It is thus in the interest of the agent to select as many data owners as possible. The second source of error lies within the query mechanism as to achieve differential privacy, the process may randomly shift the query result to meet the data owners' privacy requirements, deviating it from $\varphi(\vec{d}_{\mathrm{c}})$. Naturally, a higher value of $\varepsilon_i$ means less effort is needed to accommodate such shift for the data owner $s_i$. It is thus in the interest of the agent to select data owners that have high values of $\varepsilon_i$. To sum up, the agent would ideally maximises two values, the number of selected data owners and the $\varepsilon_i$ values of these data owners.

\begin{lemma}\label{accuracy}
%There exists a constant $k>0$ such that  
For any integer $1\leq \alpha\leq n/4$ and $\delta\in (0,1)$, if the query mechanism $A$ is $\left(\alpha,\delta\right)$-PAC, then $\sum_{i=1}^n\varepsilon_i q_i\geq \frac{n}{4 \alpha}\cdot (\ln \delta-\ln(1-\delta))$. 
\end{lemma}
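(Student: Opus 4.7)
The plan is to combine a chain-style packing construction with the group-privacy property of PDP.

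Construction: for each of the three query types I will build a chain of datasets $\vec{d}^{(0)},\vec{d}^{(1)},\ldots,\vec{d}^{(k)}$ with $k=\lfloor n/(2\alpha)\rfloor$, where consecutive datasets $\vec{d}^{(\ell-1)}$ and $\vec{d}^{(\ell)}$ differ on a fresh block $I^\ell$ of exactly $2\alpha$ indices (so $I^1,\ldots,I^k$ are pairwise disjoint) and $|\varphi(\vec{d}^{(\ell)})-\varphi(\vec{d}^{(\ell-1)})|\geq 2\alpha$. For a count query, set $\vec{d}^{(0)}=\vec{0}$ and let $\vec{d}^{(\ell)}$ flip the $2\alpha$ bits of $I^\ell$. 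For a general linear predictor, perturb the entries indexed by $I^\ell$ by amounts inversely proportional to the weights, so the weighted sum changes by exactly $2\alpha$. For the median query over mutually distinct positive integers, start with $\vec{d}^{(0)}=(K+1,K+2,\ldots,K+n)$ for a large constant $K$ and define $\vec{d}^{(\ell)}$ by replacing the values at positions $I^\ell=\{(\ell-1)\cdot 2\alpha+1,\ldots,\ell\cdot 2\alpha\}$ with sufficiently large fresh integers, which shifts the $(m{+}1)$-st order statistic up by exactly $2\alpha$ provided the replaced values remain strictly below the current median; this condition is what forces the hypothesis $\alpha\leq n/4$.

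Next, fix $\ell$ and let $B_\ell$ denote the closed $\alpha$-ball around $\varphi(\vec{d}^{(\ell)})$. By $(\alpha,\delta)$-PAC applied to $\vec{d}^{(\ell)}$, $\Pr(A(\vec{d}^{(\ell)})\in B_\ell)\geq\delta$. Applying the same property to $\vec{d}^{(\ell-1)}$ and noting that $|\varphi(\vec{d}^{(\ell)})-\varphi(\vec{d}^{(\ell-1)})|\geq 2\alpha$ places $B_\ell$ outside the PAC-ball around $\varphi(\vec{d}^{(\ell-1)})$, so $\Pr(A(\vec{d}^{(\ell-1)})\in B_\ell)\leq 1-\delta$. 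Assumption~(A5) ensures that the allocation $q$ is independent of $\vec{d}$, hence changing $d_i$ when $q_i=0$ leaves the distribution of $A$ unchanged and $A$ is effectively $\varepsilon_iq_i$-PDP in coordinate $i$. Iterating this bound across the $2\alpha$ coordinates of $I^\ell$ (group privacy) yields
\[
\Pr(A(\vec{d}^{(\ell)})\in B_\ell) \;\leq\; \exp\Bigl(\sum_{i\in I^\ell}\varepsilon_i q_i\Bigr)\,\Pr(A(\vec{d}^{(\ell-1)})\in B_\ell),
\]
and combining the three inequalities gives $\sum_{i\in I^\ell}\varepsilon_iq_i\geq \ln\delta-\ln(1-\delta)$.

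Summing over the $k$ pairwise disjoint blocks:
\[
\sum_{i=1}^n\varepsilon_iq_i \;\geq\; \sum_{\ell=1}^k\sum_{i\in I^\ell}\varepsilon_iq_i \;\geq\; k\bigl(\ln\delta-\ln(1-\delta)\bigr) \;\geq\; \frac{n}{4\alpha}\bigl(\ln\delta-\ln(1-\delta)\bigr),
\]
where the final step uses $\lfloor n/(2\alpha)\rfloor\geq n/(4\alpha)$, valid under $\alpha\leq n/4$. The main obstacle I expect is the median construction: unlike count or linear predictor, flipping an entry whose value is far from the current median leaves the order statistic unchanged, so each block $I^\ell$ must be chosen on the active side of the median at step $\ell$ and the replacement values placed outside the current range. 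A smaller subtlety is the randomness of $q$: because (A5) decouples $q$ from $\vec{d}$, the group-privacy step can be applied conditional on any realisation of $q$ and then averaged, so reading $q_i$ as the realised (equivalently expected) allocation indicator suffices.
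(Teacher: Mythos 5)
Your argument is correct in outline but takes a genuinely different route from the paper's. The paper proves each case by a \emph{single} perturbation: it sorts the owners by $\varepsilon_i q_i$, takes the $4\alpha$ cheapest, finds $2\alpha$ of them sharing a data value (or, for the linear predictor, just takes $2\alpha$ of them), flips that block to shift $\varphi$ by $2\alpha$, and derives a contradiction with $(\alpha,\delta)$-PAC via group privacy. You instead run a packing argument: $\lfloor n/(2\alpha)\rfloor$ pairwise disjoint blocks, each yielding the per-block inequality $\sum_{i\in I^\ell}\varepsilon_i q_i \geq \ln\delta-\ln(1-\delta)$, summed at the end. The two proofs rest on the same two pillars (group privacy over a $2\alpha$-block that moves $\varphi$ by $2\alpha$, plus PAC applied at both endpoints, with (A5) justifying $\varepsilon_i q_i$-PDP per coordinate), but the paper's sorting trick lets it get away with one perturbed dataset and makes the constant $\frac{n}{4\alpha}$ fall out of the averaging step, whereas your version is a direct (non-contradiction) proof whose constant comes from $\lfloor n/(2\alpha)\rfloor\geq n/(4\alpha)$ for $\alpha\leq n/4$. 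For the count query and the linear predictor your construction and arithmetic check out.

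The one genuine gap is the median case, and it is exactly where you sensed trouble. In your ascending chain, block $I^\ell$ must consist of positions whose values lie strictly below the \emph{current} median, and after $\ell$ replacements the median has already climbed by $2\alpha\ell$; the $(n{+}1)/2$-th order statistic remains an original value only while $2\alpha\ell\leq (n-1)/2$. So the chain supports only about $\lfloor (n-1)/(4\alpha)\rfloor$ blocks, not $\lfloor n/(2\alpha)\rfloor$, and the summed bound $\lfloor (n-1)/(4\alpha)\rfloor\,(\ln\delta-\ln(1-\delta))$ can fall strictly below the claimed $\frac{n}{4\alpha}(\ln\delta-\ln(1-\delta))$ (e.g.\ $n=9$, $\alpha=2$ gives one block versus the required $1.125$). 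You need to recover the lost factor, e.g.\ by running a second, disjoint descending chain on the above-median positions (replacing them with very small fresh integers pushes the median down by $2\alpha$ per block), or by switching to the paper's device of restricting attention to the side of the median carrying the smaller half of $\sum\varepsilon_i q_i$ and choosing within it the $k\leq 2\alpha$ indices of smallest $\varepsilon_i q_i$, which is how the paper's Appendix A makes the constant work. As written, the median case does not yet deliver the stated bound.
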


\begin{proof} We prove the case when $\varphi$ is the count query. Recall that this case assumes that each data entry $d_i$ is a 0/1-value. 
%Denote the ground-truth dataset \(\vec{h}_{\ground}\) with binary data. 
%Let $k=4$. 
We assume for a contradiction that $\sum_{i=1}^n\varepsilon_i q_i<\frac{n(\ln \delta-\ln(1-\delta))}{4\alpha}$ and the query mechanism is $(\alpha,\delta)$-PAC. Let $R=\{r \in \R\mid |r-\varphi(\vec{d}_{\ground})|<\alpha \}$. By the definition of $(\alpha,\delta)$-PAC,  
$\Pr\left(\Phi\left(\vec{d}_\ground\right)\in R \right) \geq \delta$.

Sort the data owners so that $\varepsilon_i q_i$ are in ascending order, i.e., $\varepsilon_1 q_1\leq \varepsilon_2 q_2\leq \ldots \leq \varepsilon_n q_n$. Consider the first $ 4\alpha$ data owners (Note that $4\alpha\leq n$). %By the Pigeonhole Principle, 
Clearly, 
$$\sum_{i=1}^{4\alpha} \varepsilon_i q_i < \frac{n(\ln \delta-\ln(1-\delta))}{4\alpha}\frac{4\alpha}{n}=\ln \delta -\ln (1-\delta).$$
Let $\vec{d}^0 \coloneqq (d_{i})_{i\in I_0}$ and $\vec{d}^1 \coloneqq (d_i)_{i\in I_1}$ where $I_j= \{1\leq i\leq 4\alpha\mid d_i=j\}$ for $j\in \{0,1\}$. 
Without loss of generality, assume that $|\vec{d}^0|>2\alpha$. Let $I'\subseteq I_0$ that contains exactly $2\alpha$ elements, and define a dataset $\vec{d}'\coloneqq (b_1,\ldots,b_n)$ where $b_i=1$ if $i\in I'$, and $b_i=d_i$ otherwise.  
%Let $\vec{h}^0 \in \vec{d}_s^0$ be a subset with size $2\alpha$. We change all data entries $b_i \in \vec{h}_{2 \alpha}^0$ to be $1$ and get dataset $\vec{h}'$. 
It follows that $\varphi(\vec{d}')=\varphi(\vec{d}_{\ground})+2\alpha$. 

%By the composition property of
It is straightforward to verify by definition of PDP that %\cite{jorgensen2015conservative}  
\begin{align*}
    \Pr\left(\Phi(\vec{d}')\in R\right) \geq & \exp\left(-\sum_{i\in I'}\varepsilon_i q_i\right) \Pr\left(\Phi(\vec{d}_{\ground})\in R\right)  \\
  >   & \exp\left(-(\ln\delta-\ln (1-\delta))\right)\times\delta   \\
  = & \frac{1-\delta}{\delta}\cdot \delta =1-\delta
\end{align*}
%The above can be written as $\Pr\left(|\Phi(\vec{d}')-\varphi(\vec{d}_{\ground})| \leq \alpha \right) > \frac{1}{3}$. 
Since $\varphi(\vec{d}')=\varphi(\vec{d}_{\ground})+2\alpha $, by the triangle inequality, we have 
$\Pr\left(|\Phi(\vec{d}')-\varphi(\vec{d}')| > \alpha  \right) \geq \Pr\left(|\Phi(\vec{d}')-\varphi(\vec{d}_{\ground})| < \alpha  \right) > 1-\delta$, 
which contradicts the $(\alpha,\delta)$-PAC assumption. 

%The statement holds for a general query type, linear predictor, when $k=2$. 
The proof is similar for the case when $\varphi$ is the general linear predictor where the data entries are real values. The only difference is that we define the set $I'$ as $\{1,\ldots,2\alpha\}$ and   
%Then we have $\sum_{i=1}^{\lfloor2\alpha\rfloor} \varepsilon_i q_i <\frac{n(\ln \delta-\ln(1-\delta))}{2\alpha}\frac{2\alpha}{n}=\ln \delta -\ln (1-\delta)$. Let $I'=\{1\leq i \leq \lfloor 2 \alpha \rfloor\}$. 
the dataset $\vec{d}'$ by $b_i=d_i+\frac{1}{w_i}$ for all $i\in I'$ and  $b_i=d_i$ otherwise. The case when $\varphi$ is a median query (over mutually distinct integers) can be found in Appendix A. %The remaining part for count query is readily applicable for linear queries. 
\end{proof}

%If we change all data entries $b_i \in \vec{d}_{2 \alpha}^0$ to be $\frac{1}{w_i}$ and get dataset $\vec{d}'$, then we have $\varphi(\vec{d}')=\varphi(\vec{d}_{\ground})+2\alpha$. 

%The statement is also applicable for other queries, such as median and linear predictor. If we change data entries of the $2\alpha $ data owners such that the distance between $\varphi(\vec{h}')$ and $\varphi(\vec{h}_{\ground})$ is $2\alpha$, then we follow the same way and we can get the same result.

\section{SingleMindedQuery (SMQ)}\label{sec:SMQ}

Fixing the confidence level $\delta$ ($>1-\delta$), Lemma~\ref{accuracy} asserts that $\sum_{i=1}^n\varepsilon_i q_i \in \Omega(n/\alpha)$ is necessary for any query mechanism to achieve $(\alpha,\delta)$-PAC. This suggests the total amount of purchased privacy $\sum_{i=1}^n\varepsilon_i q_i$ plays a significant role in determining query accuracy. Moreover, since we concern with ex-interim utility, the term $q_i$ here should be considered in expectation, i.e., $\int_\Theta \sum_{i=1}^n \varepsilon_i q_i (\vec{\psi})f(\vec{\psi})\di\vec{\psi}$.
%connects accuracy with the sum of privacy requirements from selected data owners. It states that if a mechanism needs to achieve a certain level of (PAC) accuracy, the sum $\sum_{i=1}^n\varepsilon_iq_i$ must be large. Also, in a Bayesian Nash equilibrium, $q_i(\psi_i)$ is set to be its expected value, i.e., $Q_i(\psi_i)$.
In our query mechanism, we thus aim to {\em maximise purchased privacy expectation} (PPEM) in order to obtain accurate query results, i.e., we aim to solve the following optimisation problem:
\begin{equation}
\begin{aligned}
& \text{maximise}
& & \int_\Theta \sum_{i=1}^n \varepsilon_i q_i (\vec{\psi})f(\vec{\psi})\di\vec{\psi} \\
& \text{such that }
&& \text{\eqref{eqn:IC}, \eqref{eqn:IR} and \eqref{eqn:BF}} \text{ are satisfied } \\
%& \text{s.t. }
%& & U_i (\theta_i )\geq 0  \text{ }\forall i, \theta_i\\
%&
%& & U_i (\theta_i ) \geq U_i (\psi_{i} ) \text{ } \forall i, \theta_i, %\psi_{i}\\
%&
%&& \sum_{i=1}^n\int_\Theta p_i (\theta) f_i (\theta)d\theta \leq B
\end{aligned}
\label{OP1}
\end{equation}

%the query mechanism $A$ would like to maximise the following value,
%The goal of the data broker is to minimise the error. There is a natural trade off between the accuracy of the output and privacy preservation \cite{ghosh2011selling}, which implies that the smaller the privacy parameter $\varepsilon$ is, the better privacy is protected, although less accurate results may be obtained. To get a more accurate analysis result, the data broker would prefer data owners with larger privacy parameters. Also, the accuracy is relevant to the sample size. Therefore, the data broker would like to collect data from as many data owners as possible. Considering the two factors determining accuracy, the data broker wants to purchase data from more data owners who have low privacy requirements; the latter means the broker needs to maximise the expected value of the sum of
%\begin{equation}\label{eqn:OF1}
%\int_\Theta \sum_{i=1}^n \varepsilon_i q_i (\vec{\psi})f(\vec{\psi})\di \vec{\psi}
%\end{equation}
%while ensuring \eqref{eqn:IC}, \eqref{eqn:IR}, \eqref{eqn:BF} for $\Psi$ and \eqref{eqn:PDP} for $\Phi$. %In the rest of the paper, we will describe a PQ agent that solves this problem.

In the rest of the paper, we describe our data query mechanism, namely {\em SingleMindedQuery} (SMQ), to solve Problem~\eqref{OP1}. % The SMQ agent takes the budget and the reported information from data owners as inputs, chooses a subset of data owners to purchase data from, and returns a query answer derived from their data. %At last, the SMQ  agent compensates them accordingly. 
%The SMQ process is consist of two parts, a procurement mechanism and a privacy preserving query mechanism. 
%Below we describe the procurement mechanism and the query mechanism separately. 
%\textcolor{red}{
%As discussed in Section~\ref{Pre}, we adopt $P\E$ mechanism as the query mechanism of the SMQ. In fact, it can be other PDP functions. The next question to find the optimal procurement mechanism.
%}
%\subsection{Procurement mechanism}
%Finding the optimal procurement mechanism amounts to defining the allocation and payment functions to 
%
Note that the problem can be regarded as a {\em knapsack auction problem} (see \cite{ensthaler2014bayesian}): Treat $B$ as the {\em capacity} of the knapsack, $\varepsilon_i$ as the {\em value} and $\theta_i$ as the {\em weight} of the $i$th item ($1\leq i\leq n$).

%To solve , we define a simple direct mechanism $\Psi$. 
\begin{definition} A {\em simple direct mechanism} $\Psi$ consists of allocation rule $q_i(\vec{\psi})$ and payment rule $p_i(\vec{\psi})$ as follows,
\begin{equation} 
q_i (\vec{\psi} )=Q_i(\psi_i) \coloneqq
%{\text Allocation:  \qquad}q_i (\psi_i )=Q_i(\psi_i) \coloneqq
\begin{cases}
1& \text{if } \psi_i \leq \theta_i^*\\
0& \text{otherwise}
\end{cases}
\label{qi}
\end{equation}
\begin{equation}
p_i (\vec{\psi})=P_i(\psi_i)\coloneqq \psi_i Q_i (\psi_i )+\int_{\psi_i}^{\overline{{\theta}_i}} Q_i (s) \di s
%{\text Payment: \qquad}p_i (\psi_i )=P_i(\psi_i)\coloneqq \psi_i Q_i (\psi_i )+\int_{\psi_i}^{\overline{{\theta}_i}} Q_i (s) \di s
\label{pi}
\end{equation}
\end{definition} 
In a simple direct mechanism, each data owner $s_i$ has a take-it-or-leave-it offer with $s_i$'s valuation bounded by a {\em threshold} $\theta_i^*$. If the reported $\psi_i$ is smaller than the threshold, the data owner will be selected and get a compensation $p_i (\vec{\psi})$, which is higher than $\psi_i$.  
%%%%%%%To be moved to later section
%Another important difference between the work above and ours lies in which type of utilities are of concern by the auction participants. The mechanisms from \cite{ghosh2011selling} and \cite{dandekar2012privacy} concern the actual payment to each data owner which leads to dominant-strategy incentive compatibility and ex-post rationality conditions (see e.g. Def. 2.4, Def. 2.5 in \cite{ghosh2011selling}).  Mechanism design, however, concerns mechanisms that lead to desirable behaviours of each participant, which is determined before the participant knows the private information of other players. In other words, the outcome of the game is made based on {\em ex-interim} utility expectations \cite{conitzer2009prediction}. It therefore makes sense to discuss the slightly weaker conditions of Bayesian Nash incentive compatibility and interim rationality. In this work, we design a mechanism that satisfies these conditions (See Def.\ref{}). 

\begin{lemma}\label{lem:IC&IR}
Assuming that $\theta_i^*$ is independent from the reported valuation $\psi_i$ for all $1\leq i\leq n$, a simple direct mechanism $\Psi$ is incentive compatible and individually rational.
\end{lemma}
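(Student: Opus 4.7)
The plan is to verify IR and IC directly from the closed-form expressions for $Q_i$ and $P_i$ given in \eqref{qi} and \eqref{pi}, exploiting the fact that the independence assumption guarantees $Q_i(\cdot)$ is a genuine threshold indicator (non-increasing in $\psi_i$) and that $P_i$ is the Myerson payment associated with this allocation rule.

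First I would rewrite the expected utility in a convenient form. Substituting \eqref{pi} into $U_i(\psi_i|\theta_i)=P_i(\psi_i)-\theta_i Q_i(\psi_i)$ gives
\[
U_i(\psi_i|\theta_i)=(\psi_i-\theta_i)Q_i(\psi_i)+\int_{\psi_i}^{\overline{\theta_i}}Q_i(s)\,\mathrm{d}s.
\]
For IR, I would set $\psi_i=\theta_i$. The first term vanishes and one is left with $U_i(\theta_i|\theta_i)=\int_{\theta_i}^{\overline{\theta_i}}Q_i(s)\,\mathrm{d}s$, which is non-negative since $Q_i(s)\in\{0,1\}$. This immediately yields \eqref{eqn:IR}.

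For IC, I would argue by a two-case analysis based on the relative position of $\theta_i$ and $\theta_i^*$, using that $Q_i(s)=1$ for $s\le\theta_i^*$ and $0$ otherwise. If $\theta_i\le\theta_i^*$, then truthful reporting gives $U_i(\theta_i|\theta_i)=\theta_i^*-\theta_i$; misreporting any $\psi_i\le\theta_i^*$ also yields exactly $\theta_i^*-\theta_i$ (the $\psi_i$-dependence cancels because the integral equals $\theta_i^*-\psi_i$), while any overreport $\psi_i>\theta_i^*$ yields utility $0\le\theta_i^*-\theta_i$. Symmetrically, if $\theta_i>\theta_i^*$, truthful reporting gives utility $0$; any underreport $\psi_i\le\theta_i^*$ gives $\theta_i^*-\theta_i<0$, and any overreport still gives $0$. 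In every case $U_i(\theta_i|\theta_i)\ge U_i(\psi_i|\theta_i)$, which is \eqref{eqn:IC}.

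The step that actually needs care is making sure the independence hypothesis is used at the right place: the cancellation in the integral $\int_{\psi_i}^{\overline{\theta_i}}Q_i(s)\,\mathrm{d}s$ and the case analysis only work because $\theta_i^*$ (and therefore the entire function $Q_i(\cdot)$) does not itself shift when $s_i$'s report changes; otherwise a deviating report could move the threshold and break monotonicity. I would state this explicitly at the beginning of the proof, then proceed to the case analysis. No deeper obstacle is expected — the lemma is essentially the Myerson characterisation specialised to a single-threshold allocation rule in a reverse-auction setting.
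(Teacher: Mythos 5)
Your proposal is correct and follows essentially the same route as the paper's own proof: a direct verification of IR and IC from the threshold form of $Q_i$ and the payment rule \eqref{pi}, with the same exhaustive case analysis on the positions of $\psi_i$, $\theta_i$, and $\theta_i^*$. Your preliminary rewriting $U_i(\psi_i|\theta_i)=(\psi_i-\theta_i)Q_i(\psi_i)+\int_{\psi_i}^{\overline{\theta_i}}Q_i(s)\,\mathrm{d}s$ is a minor streamlining that makes the cancellation for $\psi_i\le\theta_i^*$ transparent, but the substance is identical to the paper's argument.
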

\begin{proof}
For IR, suppose $\theta_i \leq \theta_i^*$. Then $Q_i(\theta_i)=1$. By \eqref{pi}, $P_i(\theta_i)$ equals
\begin{equation}\label{eqn:payment}
\theta_i Q_i(\theta_i)+\int_{\theta_i}^{\overline{\theta_i}}Q_i (s) \di s = \theta_i + \int_{\theta_i}^{\theta_i^*} 1 \di s=\theta_i^*,
\end{equation}
and  $U_i(\theta_i|\theta_i)=P_i(\theta_i)-\theta_i Q_i(\theta_i)=\theta_i ^* -\theta_i \geq 0$. If $\theta_i>\theta_i^*$, $Q_i(\psi_i)=0$ which implies $P_i(\theta_i)=0$ and $U_i(\theta_i|\theta_i)=0$. In either case, the expected utility of reporting the valuation truthfully is non-negative. %Therefore, IR is satisfied.

The IC condition can be proved easily by considering the two possible cases of $\psi_i>\theta_i$ and $\psi_i<\theta_i$. The full proof is in Appendix B. 
\end{proof}

%Now, we have the optimal procurement mechanism and we still need to find the optimal $\theta_i^*$ for each data owner $s_i$.  
We would like to define a simple direct mechanism to solve \eqref{OP1}. 
%After applying the allocation rule \eqref{qi} and payment rule \eqref{pi}, 
%The problem \eqref{OP1} is transformed to a problem of 
It remains to find the appropriate threshold $\vec{\theta^*}$.  For the following lemma, by an {\em optimal threshold}, we mean a vector $\vec{\theta^*}=(\theta^*_1,\ldots,\theta^*_n)$ whose corresponding allocation and payment rules as defined in \eqref{qi} and \eqref{pi}, respectively, is an optimal solution for \eqref{OP1}.

\begin{lemma}\label{lem:theta}
The optimal solution to the following optimisation problem \eqref{OP2} is an optimal threshold.
\end{lemma}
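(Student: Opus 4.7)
The plan is to reduce the mechanism-design problem (OP1), stated over all allocation/payment functions satisfying IC, IR, and BF, to a finite-dimensional parametric optimisation over the threshold vector $\vec{\theta^*}\in\Theta^n$. Since a simple direct mechanism is determined by $\vec{\theta^*}$ via \eqref{qi} and \eqref{pi}, and since Lemma~\ref{lem:IC&IR} already guarantees IC and IR for every such mechanism, the only constraint from \eqref{OP1} that still needs to be translated is budget feasibility, and the only quantity left to re-express is the objective.

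First, I would rewrite the objective of \eqref{OP1} under the threshold allocation. Using \eqref{qi} and the i.i.d. assumption on the $\theta_i$'s,
\begin{equation*}
\int_{\Theta^n}\!\sum_{i=1}^n \varepsilon_i q_i(\vec{\psi})\,f(\vec{\psi})\,\di\vec{\psi}
=\sum_{i=1}^n \varepsilon_i \Pr(\theta_i\leq\theta_i^*)
=\sum_{i=1}^n \varepsilon_i F(\theta_i^*).
\end{equation*}
Next, I would translate interim BF. By \eqref{eqn:payment} in the proof of Lemma~\ref{lem:IC&IR}, we have $P_i(\psi_i)=\theta_i^*$ when $\psi_i\leq\theta_i^*$ and $P_i(\psi_i)=0$ otherwise, so $\int_{\Theta}P_i(\psi_i)f(\psi_i)\,\di\psi_i=\theta_i^* F(\theta_i^*)$. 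Hence \eqref{eqn:BF} becomes $\sum_i \theta_i^* F(\theta_i^*)\leq B$. I therefore expect \eqref{OP2} to be the programme ``maximise $\sum_i\varepsilon_i F(\theta_i^*)$ subject to $\sum_i \theta_i^* F(\theta_i^*)\leq B$ and $\theta_i^*\in\Theta$''; with this identification the reduction is a two-way correspondence: every feasible $\vec{\theta^*}$ yields, via \eqref{qi}--\eqref{pi}, a simple direct mechanism satisfying IC, IR, BF with matching objective value, and conversely every simple direct mechanism arises from a feasible $\vec{\theta^*}$. So an optimal solution of \eqref{OP2} is an optimal threshold for \eqref{OP1}.

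The main obstacle is the implicit ``without loss of generality'' step: why the optimum over all IC+IR+BF direct mechanisms is attained within the class of simple direct mechanisms, whose allocation is a deterministic threshold depending only on $\psi_i$. For this I would invoke Myerson's envelope characterisation for single-parameter Bayesian mechanisms: IC plus IR forces $Q_i(\psi_i)$ to be monotone non-increasing in $\psi_i$ with payment pinned down (up to a boundary term) by $P_i(\psi_i)=\psi_i Q_i(\psi_i)+\int_{\psi_i}^{\overline{\theta}}Q_i(s)\,\di s$. Since $q_i\in\{0,1\}$, a monotone $Q_i$ must be a step function, i.e., a threshold rule. The regularity assumption (A2) on $F$ is what ensures that this reduction is lossless and that \eqref{OP2} is a well-posed optimisation over $\Theta^n$ rather than over interim reduced forms, which closes the argument.
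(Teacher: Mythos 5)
Your computational core---rewriting the objective of \eqref{OP1} as $\sum_{i}\varepsilon_i F(\theta_i^*)$ and the interim budget constraint as $\sum_i\theta_i^*F(\theta_i^*)\leq B$, with IC and IR discharged by Lemma~\ref{lem:IC&IR} because $\theta_i^*$ does not depend on $\psi_i$---is exactly the paper's proof in Appendix~C. The only cosmetic difference is that the paper additionally argues the budget constraint binds at the optimum (one can always raise some $\theta_i^*$ otherwise), which is why \eqref{OP2} carries an equality rather than your inequality.

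The genuinely new part of your write-up---justifying why the optimum over \emph{all} IC, IR and BF direct mechanisms is attained by a simple direct mechanism---is the right question to raise (the paper is silent on it), but the argument you give for it fails. The step ``since $q_i\in\{0,1\}$, a monotone $Q_i$ must be a step function'' is not correct: by \eqref{Qi}, $Q_i(\psi_i)$ is the expectation of $q_i(\psi_i,\vec{\psi}_{-i})$ over the other agents' types, so it ranges over $[0,1]$ and is typically continuous and strictly decreasing (e.g.\ under a ``select the $k$ cheapest reports'' rule). Monotonicity of the \emph{interim} allocation therefore does not force the \emph{ex-post} allocation to be a deterministic threshold in $\psi_i$ alone. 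The standard way to close this gap is the one your invocation of Myerson points toward but does not complete: use the envelope formula to write each agent's expected payment as the expected virtual cost $\int_\Theta\bigl(\psi+F(\psi)/f(\psi)\bigr)Q_i(\psi)f(\psi)\di\psi$, attach a multiplier $\lambda\geq 0$ to the budget constraint, and maximise $\sum_i\int_\Theta\bigl(\varepsilon_i-\lambda(\psi+F(\psi)/f(\psi))\bigr)Q_i(\psi)f(\psi)\di\psi$ pointwise over $Q_i(\psi)\in[0,1]$; regularity (A2) makes the virtual cost increasing in $\psi$, so the pointwise maximiser is $1$ below a cutoff and $0$ above it, i.e.\ precisely a simple direct mechanism. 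As written, your proof (like the paper's) establishes optimality of the \eqref{OP2} solution only within the class of simple direct mechanisms, and the Myerson paragraph does not yet upgrade that to optimality over the full feasible set of \eqref{OP1}.
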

\begin{equation}
\begin{aligned}
& \max_{\Theta}
& & \sum_{i=1}^n \varepsilon_i F_i (\theta_i^*) \\
& \text{such that }
&& \sum_{i=1}^n \theta_i^* F_i(\theta_i^*) = B \\
&
& & \underline{\theta} \leq \theta_i^* \leq \overline{\theta} \text{\qquad }\forall i \in \{1, \ldots, n\}
\end{aligned}
\label{OP2}
\end{equation}
\begin{proof} 
%According to the analysis in \cite{myerson1981optimal}, since $F_i$ is assumed to be regular, the IC and IR constraints \eqref{eqn:IC} \& \eqref{eqn:IR} are equivalent to (1) $Q_i(\psi_i)$ is increasing, and (2) $P_i(\psi_i)=\psi_i Q_i(\psi_i)+\int_{\psi_i}^{\overline{\theta}}Q_i(s)\di s$ for each $i \in \{1,\ldots,n\}$. These two conditions are guaranteed by the allocation and payment functions (\ref{qi}) and (\ref{pi}), respectively. Therefore, the IC and IR constraints are satisfied. 
%Hence, the constraints \eqref{eqn:IC} \& \eqref{eqn:IR} are guaranteed by the allocation and payment functions (\ref{qi}) and (\ref{pi}). 
%We first show that $\Psi$ satisfies IC and IR. 
%and BF. BF can be guaranteed by the first constraint of Problem (\ref{OP2}).
%\smallskip
By substituting (\ref{Qi}) the objective function of \eqref{OP1} becomes $\sum_{i=1}^n \int_{\underline{\theta}}^{\overline{\theta}}
\varepsilon_i Q_i (\psi_i) f_i(\psi_i) \di\psi_i$, which, by \eqref{qi}, is $$\sum_{i=1}^n \int_{\underline{\theta}}^{\theta_i^*}
\varepsilon_i f_i(\psi_i) \di\psi_i=\sum_{i=1}^n \varepsilon_i F_i (\theta_i^*).$$ 

IC and IR are satisfied due to Lemma~\ref{lem:IC&IR} and the fact that $\theta^*_i$ is chosen by solving \eqref{OP2}, which is independent from $\psi_i$. 
%Firstly, since the threshold $\theta^*_i$ is determined by solving \eqref{OP2}, it is independent from $\psi_i$. By Lemma~\ref{lem:IC&IR},  IC and IR constraints are satisfied by allocation rule \eqref{qi} and payment rule \eqref{pi}.
BF is equivalent to $\sum_{i=1}^n\theta_i^* F_i(\theta_i^*)\leq B$ which can be derived using \eqref{Pi} and \eqref{pi}. Moreover, it is easy to see that (\ref{eqn:BF}) is binding, i.e., $\sum_{i=1}^n\theta_i^* F_i(\theta_i^*)=B$. Otherwise, we can always increase the value of $\theta_i^*$ and select more data owners. See full proof at Appendix C. 
\end{proof}

%Now we have Problem \eqref{OP2}, which aims to find optimal $\vec{\theta^*}$. 
To solve problem \eqref{OP2}, take the Lagrange function 
\begin{multline}
L(\vec{\theta^*},\lambda,\mu_1,\ldots, \mu_n, \gamma_1, \ldots, \gamma_n)\coloneqq \sum_{i=1}^n \varepsilon_i F_i (\theta_i^* ) \\
-\lambda \sum_{i=1}^n (\theta_i^* F_i (\theta_i^*)-B)
- \mu_i (\theta_i^*-\overline{\theta})-\gamma_i (\underline{\theta}-\theta_i^* ) \\
\text{\qquad} \forall i \in \{1,\ldots, n\},
\end{multline}
 where $\lambda$, $\mu_i$ and $\gamma_i$ are Lagrange multipliers. Setting the first order derivative  to 0, we get:
 \begin{equation}\label{eqn:first order}
  \varepsilon_i f_i(\theta_i^*)-\lambda (F_i (\theta_i^*)+\theta_i^* f_i (\theta_i^*)) + \mu_i - \gamma_i =0 \text{\ } 
 \forall 1\leq i\leq n\\
\end{equation}
 The desirable threshold vector $\vec{\theta^*}$ is the solution to the system that contains \eqref{eqn:first order} and the following conditions:
\begin{equation}
\begin{aligned}
%& \varepsilon_i f_i(\theta_i^*)-\lambda (F_i (\theta_i^*)+\theta_i^* f_i (\theta_i^*)) + \mu_i - \gamma_i =0
%& & \forall i=1,\ldots,n\\
& \mu_i (\underline{\theta}-\theta_i^*)=0
&&  \forall i=1,\ldots,n \\
& \gamma_i (\theta_i^*-\overline{\theta})=0
&&  \forall i=1,\ldots,n \\
& \sum_{i=1}^n \theta_i^* F_i(\theta_i^*)-B= 0
&&   \\
& \lambda,\mu_1,\ldots, \mu_n, \gamma_1, \ldots, \gamma_n \geq 0
\end{aligned}
\label{KKT}
\end{equation} 
%\textcolor{red}{The convexity of Problem (\ref{OP2}) guarantees the optimality of solution \(\vec{\theta}^*\) \cite{ensthaler2014bayesian}.}

%The $\vec{\theta^*}$ in the solution to Problem (\ref{KKT}) is the desirable threshold valuation vector.

%According the , if .

Our procurement mechanism $\Psi$ takes $\vec{\psi}$ and $\vec{\varepsilon}$ as inputs. It first solves the system above and obtains a threshold vector $\vec{\theta^*}$. %, indicating a threshold for each $i \in \{1, \ldots, n\}$.
$\Psi$ then selects data owners based on this vector: 
For $1\leq i\leq n$, select the data owner $s_i$ if $s_i$'s reported valuation $\psi_i$ is lower than $\theta_i^*$. In this case, make a payment of $\theta_i^*$ to $s_i$. Otherwise, $s_i$ is not chosen and the payment is 0. We propose an algorithm to implement the procurement mechanism as shown in Alg. \ref{alg:PM}.

\begin{algorithm}
\caption{Procurement mechanism $\Psi$}
\label{alg:PM}
\begin{algorithmic}
\STATE Solve the system \eqref{eqn:first order},\eqref{KKT} to obtain $\theta_i^*$ for $1\leq i\leq n$.
\FOR{$i \in \{1, \ldots, n\}$} 
    \IF{$\psi_i \leq \theta_i^*$}
        \STATE set $q_i\coloneqq 1$ and pay $p_i\coloneqq \theta_i^*$; 
    \ELSE
        \STATE set $q_i\coloneqq 0$ and pay $p_i\coloneqq 0$.
    \ENDIF
\ENDFOR
\end{algorithmic}
\end{algorithm}

The next theorem follows from Lemma~\ref{lem:theta}, Karush-Kuhn-Tucker theorem (see \cite{luenberger1997optimization}), and the convexity of Problem \eqref{OP2}. See full proof at Appendix D. 
\begin{theorem}
\label{Thr.IC}
The procurement mechanism $\Psi$ guarantees to find the optimal solution of Problem \eqref{OP1}. 
\end{theorem}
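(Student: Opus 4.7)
The plan is to chain Lemma~\ref{lem:theta} with a standard Karush--Kuhn--Tucker (KKT) argument. By Lemma~\ref{lem:theta}, any optimal threshold vector $\vec{\theta^*}$ for Problem~\eqref{OP2} induces, via the simple-direct-mechanism rules~\eqref{qi}--\eqref{pi}, an optimal solution to Problem~\eqref{OP1}. Since Algorithm~\ref{alg:PM} first computes $\vec{\theta^*}$ by solving the system~\eqref{eqn:first order}--\eqref{KKT} and then applies exactly these rules, it suffices to show that this system correctly characterises the optimum of~\eqref{OP2}.

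For necessity, I would check that a standard constraint qualification (e.g.\ linear independence of the active constraint gradients) holds on the feasible set of~\eqref{OP2}: the gradient of the single equality constraint $\sum_i \theta_i^* F_i(\theta_i^*)=B$ is generically linearly independent from any active box-constraint gradients, since the latter pin down individual coordinates. The KKT theorem then yields that every optimal $\vec{\theta^*}$ must satisfy the first-order equations~\eqref{eqn:first order} together with the complementary slackness, primal feasibility, and dual feasibility conditions collected in~\eqref{KKT}. For sufficiency, I would invoke convexity of~\eqref{OP2}: combined with the common-distribution assumption $F_1=\cdots=F_n$ and the regularity condition~(A2) on $f/(1-F)$, the objective $\sum_i\varepsilon_i F_i(\theta_i^*)$ can be shown to be concave and the feasible set convex on $[\underline\theta,\overline\theta]^n$, so the converse KKT theorem for convex programs (as cited via~\cite{luenberger1997optimization}) promotes any KKT point to a global maximum. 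Combined with Lemma~\ref{lem:theta}, this shows that the mechanism $\Psi$ implemented by Algorithm~\ref{alg:PM} is optimal for~\eqref{OP1}.

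The principal obstacle is the convexity step. A regular CDF $F$ is not in general concave, and the map $\theta\mapsto\theta F(\theta)$ appearing in the budget constraint has no automatic curvature, so translating the monotone hazard-rate assumption~(A2) into the precise concavity of the objective and convexity of the constraint surface that the KKT sufficiency theorem requires is the delicate piece of work. If global convexity cannot be established cleanly, I would instead argue uniqueness of the KKT solution by an exchange argument: starting from any alleged optimum and moving towards the KKT point, the binding budget constraint (which was shown to be tight in the proof of Lemma~\ref{lem:theta}) forces a one-for-one trade-off between $\theta_i^*$ and $\theta_j^*$ that is priced exactly by the common multiplier $\lambda$ in~\eqref{eqn:first order}, so no perturbation can strictly increase $\sum_i\varepsilon_i F_i(\theta_i^*)$. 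Either route closes the argument and establishes the theorem.
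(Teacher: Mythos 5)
Your skeleton matches the paper's: reduce to Problem~\eqref{OP2} via Lemma~\ref{lem:theta}, then use KKT sufficiency for a convex program. But the step you yourself flag as ``the delicate piece of work'' --- establishing convexity --- is exactly where your argument stops short, and neither of your two proposed routes closes it. In the original variables $\theta_i^*$ the problem is indeed not convex: $F_i$ need not be concave and $\theta\mapsto\theta F_i(\theta)$ has no definite curvature, so the converse KKT theorem cannot be invoked as stated. Your fallback exchange argument does not rescue this: the observation that the common multiplier $\lambda$ prices a one-for-one trade-off along the binding budget constraint is only a first-order (stationarity) statement, and without curvature it certifies a critical point, not a global maximum, nor uniqueness of the KKT solution.

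The paper's missing idea is a change of variables. Set $B_i\coloneqq\theta_i^*F_i(\theta_i^*)$, the expected spend on data owner $s_i$. The budget constraint becomes the affine equation $\sum_{i=1}^nB_i=B$, and differentiating the identity $B_i=\theta_i^*(B_i)F_i(\theta_i^*(B_i))$ gives
\begin{equation*}
f_i(\theta_i^*)\,\theta_i^{*\prime}(B_i)=\frac{1}{\frac{F_i(\theta_i^*)}{f_i(\theta_i^*)}+\theta_i^*},
\end{equation*}
so the derivative of the objective with respect to $B_i$, namely $\varepsilon_i f_i(\theta_i^*)\theta_i^{*\prime}(B_i)$, is strictly decreasing because the virtual-cost-type expression $F_i/f_i+\theta_i^*$ is increasing under the regularity assumption. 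Hence the objective is separable and concave in the $B_i$'s, the feasible set is convex, and KKT sufficiency applies in the transformed coordinates. Without this reparametrisation (or an equivalent argument), your proof has a genuine gap at its central step. A minor additional remark: your ``necessity'' direction (constraint qualification, every optimum is a KKT point) is not needed for the theorem --- the algorithm produces a point satisfying the system, and only the sufficiency direction is required to conclude it is globally optimal.
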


After data procurement, the data entries of selected data owners form a dataset $\vec{d}_\sampled$, and $P\E$ mechanism is applied on it. SMQ can meet the hard privacy constraints of all data owners. For each $s_i$, the achieved privacy is denoted as $\varepsilon_i'$. For those who are not selected, $\varepsilon_i’=0<\varepsilon_i$; for the selected ones, $\varepsilon_i’\leq \varepsilon_i$ is guaranteed by $P\E$ mechanism.  
%SMQ can meet the hard privacy constraints of all data owners. For those who are not selected, any change on $d_i$ does not influence the query result, thus $\varepsilon_i’=0<\varepsilon_i$; for the chosen data owners, the $P\E$ mechanism is applied on $\vec{d}_\sampled$, which guarantees $\vec{\varepsilon}$-PDP for all of them.}

%\subsection{Personalised private query agent}
%SMQ implements $\Psi$ and $\Phi$ sequentially. 

%\begin{proof}
%Lemma \ref{Thr.IC} asserts IC, IR, BF and optimality of the proposed mechanisms. 
%\textcolor{yellow}{By Lemma \ref{Thr.PDP}}, for each $s_i \in \S_c$, the privacy achieved by $A$ is $\varepsilon_i'=\varepsilon_i$. Also, for each $s_i \in \S\backslash\S_c$, any change on $d_i$ does not influence the output, therefore,
%\(\frac{\Pr(\Phi(\vec{d})=r) }{\Pr(\Phi(\vec{d}')=r)}=1=\exp(0), \forall r\in \Range(\Phi_{\varphi})\).
%The privacy achieved is $\varepsilon_i'=0 \leq \varepsilon_i$.
%\end{proof}

\section{Experiment Setup}\label{sec:experiment}
%We conducted experiments on real and synthetic data. 
Through the experiments, we aim to evaluate the performance of SMQ in terms of its query accuracy under different query types, budgets, and dependence relationships between $\vec{\theta}$ and $\vec{\varepsilon}$. 
We consider three query types, count, median and linear predictor. 
%In all scenarios, the broker  has a budget and data owners have private data, heterogeneous privacy protection requirements and privacy valuations. 
 %which is designed specifically for this type of query. 
As performance metric, we compare the mean and $95 \% $ confidential interval (CI) of the returned query results against the true query answer $\varphi(\vec{d}_{\ground})$, and use root mean squared error (RMSE) to measure the error.

%We consider the impact of the broker 's budget. Also, we consider the influence of the dependence relationship between the privacy protection requirements and valuations. Through the experiments, We aims to evaluate the performance of SMQ in terms of its query accuracy under different budgets and $\theta$-$\varepsilon$ dependence relationship settings. 

%\subsection{Experiment Setups}\label{sec:setup}

\noindent {\bf Datasets $\vec{d}_{\ground}$.} We use three real-world datasets, including Adults dataset\footnote{https://archive.ics.uci.edu/ml/datasets/Adult}, MovieLense 1M dataset \cite{harper2016movielens}, and Residential energy consumption survey (RECS) dataset \cite{energy}. Adults dataset consists of $32,561$ entries, each representing an adult living in the US \cite{dua2017uci}. It has $15$ attributes, including age, income, education, marital status, etc. %Attribute fnlwgt is removed due to its unclear definition. Attribute education is removed as attribute education-number provides the same information. 
The RECS dataset has $12,084$ records, each record with $940$ attributes, including identifier, region, division, etc. The MovieLense 1M dataset contains the information of $6,040$ audience, $3,952$ movies and $1,000,209$ ratings. %Six attributes, such as user ID, gender, age, etc, are used to describe audience. Three attributes, including movie ID, title and genres, are used to describe movies. For all datasets, records with N/A value are removed.

\smallskip

\noindent {\bf Privacy parameters $\vec{\varepsilon}$ and data valuations $\vec{\theta}$.} The three datasets contain no information about privacy attitudes of the data owners, so we generate two sets of random numbers, representing data valuations $\theta_i$ and privacy requirements $\varepsilon_i$, respectively. $\varepsilon_i$ is a small non-negative number, hence, we restrict the privacy parameter to be bounded by $1$.  The $\vec{\varepsilon}$ and $\vec{\theta}$ are correlated uniformly distributed random variables in the range $(0,1)$. They are generated based on a correlation coefficient $\rho$. As a smaller $\varepsilon_i$ denotes a more stringent privacy requirement, negative values of $\rho$ expresses positive correlation. 
We consider three different dependence relationships between $\vec{\theta}$ and $\vec{\varepsilon}$: (1) {\em independence}, where $\rho\coloneqq 0$, (2) {\em partial positive correlation}, where $\rho\coloneqq -0.5$, and (3) {\em perfect positive correlation}, where $\rho\coloneqq -1$. In scenario (1), $\vec{\theta}$ and $\vec{\varepsilon}$ are irrelevant. In other words, having a high privacy requirement does not necessarily mean that this data owner attaches high valuation on her private data. In contrast, in scenarios (2) and (3), a data owner with high privacy requirement tends to have high data valuation.

\smallskip

\noindent {\bf Budget $B$.} The broker  has a budget $B \leq \overline{\theta}n$ for data procurement. We investigate the performance of the SMQ  under different budgets, $B=\{0.1 \overline{\theta}n, 0.2\overline{\theta}n, \ldots, 0.9\overline{\theta}n \}$.

\smallskip

\noindent {\bf Query types $\varphi$. }\label{sec:query} %We consider three common query types: count, median and linear predictor. 
For count query, we use the income attribute of the Adults dataset,  gender attribute of the MovieLense 1M dataset and the total site electricity usage of the RECS dataset. The count queries ask: {\em How many adults have income higher than $50$ k? How many female audiences? And how many households consume more than 10 thousands kwh?}

For median query, %which is to find the median value in a list of numbers. 
we use the age attribute of the Adults dataset and the MovieLense 1M dataset and the total site electricity usage (integer-valued) of the RECS dataset. %The median queries ask: What is the median age of the population? What is the median age of all audience of the first movie? And what is the median value of the total site electricity usage, in kwh, in 2009?

For the linear predictor, %we want to learn $\varphi(\vec{d}_{\ground})=\sum_{i:s_i \in \S} w_i d_i$, where $w_i$ is the weight of the data owner $s_i$ and it captures the similarity between an existing data owner in dataset $\vec{d}_{\ground}$ and a new individual $s_{n+1}$ outside the dataset. $\varphi(\vec{d}_{\ground})$ is the prediction of the data value for $s_{n+1}$. In our experiment, 
we use the data in the last row to represent $s_{n+1}$ and the data in the other rows to represent existing data owners. We choose the data of one attribute as $\vec{d}_{\ground}=(d_1, \ldots, d_n)$ and the data of the other attributes as the profile, denoted as $Y=(y_1, \ldots, y_n, y_{n+1})$, where $y_{n+1}$ is the profile of $s_{n+1}$, and used to calculate the similarities. Here, we use a common measure, {\em cosine similarity}, to quantify the similarity between $s_i$ and $s_{n+1}$, i.e., for each $i \in \{1,\ldots, n\}$, \(
w_i\coloneqq\cos\_sim(s_i,s_{n+1})=\frac{y_i \cdot y_{n+1}}{\| y_i\| \| y_{n+1}\|}
\).
We use the Adult dataset, the MovieLense 1M dataset and the RECS dataset, to predict whether a new individual's income is higher than $50$ k, whether a new individual likes the movie and whether a new household consume more than 10 thousands kwh, respectively.

In SMQ, the set $\Range_{\varphi}(\vec{d}_{\mathrm{c}})$ is constructed differently 
%noisy query answer $r$ are different 
for different query types. For instance, $\Range_{\varphi}(\vec{d}_{\mathrm{c}})$ can be enumerated as $\frac{\sum_{i:s_i \in \S} q_i}{\sum_{i:s_i \in \S_c} q_i}t$ in a count query; as $t$ in a median query; and as $\frac{\sum_{i: s_i \in \S} w_i}{\sum_{i: s_i \in \S_c} w_i}t$ in a linear predictor, %\footnote{$w_i$ is the weight of data owner $s_i$; See Section~\ref{sec:query}}, 
where $t \in \Range(\varphi)$ and $\S_c$ is the set of the selected data owners. 

\noindent {\bf Baselines.} For count and median queries, we compare SMQ  with FQ \cite{ghosh2011selling}. For linear predictor, we compare with FIP \cite{dandekar2012privacy}. %We consider two baselines: FQ \cite{ghosh2011selling} and FIP \cite{dandekar2012privacy}. 

\noindent {\bf FQ.} FQ achieves $\frac{1}{n-k}$-DP and pays equally for all $k$ selected data owners. 
%Our setting is different from that of FQ, where the reported valuation $\psi_i '$ is marginal valuation and thus the valuation $\psi_i$ in a $\varepsilon_i$-DP mechanism is the product of the privacy parameter and the marginal valuation, i.e., $\psi_i=\varepsilon_i \psi_i'$.
%Before implementing the FQ mechanism, we firstly calculate the marginal privacy valuation $\theta_i '= \frac{\theta_i}{\varepsilon_i}$ for each data owner $i \in \{1,\ldots, n\}$.
FQ first computes privacy valuation $v_i\coloneqq \theta_i/\varepsilon_i$ for each $s_i$. It then selects the $k$ data owners with the least $v_i$ where $k$ is the largest integer satisfying $k v_k \leq B$. FQ then pays each selected data owner $\min \left\{B/k,v_{k+1}/(n-k)\right\}$ as compensation. The query answer is $r\coloneqq \sum_{i=1}^k d_i +(n-k)/2+Lap(n-k)$, where $Lap(n-k)$ is Laplace noise with variance $(n-k)$.

We also adjust the original FQ formulation for median queries. %In \cite{ghosh2011selling}, FQ is used for count queries, we adjust it for median queries.
The mechanism follows the same allocation rule and pricing rule as those for count queries. The query answer is $r\coloneqq \varphi(\vec{d}_{\mathrm{c}})+Lap(\Delta \varphi(n-k))$, where $\Delta \varphi$ is the sensitivity of $\varphi$ on $\vec{d}_{\mathrm{c}}$ \cite{dwork2006calibrating}.

%$\hat{s}=\varphi_{med}(D_c)+Lap(\frac{\max \limits_{d_i \in D_s} d_i - \min \limits_{d_i \in D_s} d_i}{n-k})$.

\noindent {\bf FIP.} %FIP is applied to linear predictors. 
FIP firstly sorts $v_i$ in ascending order. If there exists an $s_{i^*}$ whose weight $w_{i^*}$ satisfies $w_{i^*} > \sum_{i\colon \S \backslash \{s_{i^*}\}}w_i$, FIP only selects $s_{i^*}$. Otherwise, FIP selects the first $k$ data owners subject to $B$, i.e., $\frac{B}{\sum_{i=1}^k w_i}  \geq \frac{v_k }{\sum_{i=k+1}^n w_i}$ 
and pays $p_i=w_i \min \left\{\frac{B}{\sum_{i=1}^k w_i}, \frac{v_{k+1} }{\sum_{i=k+1}^n w_i}\right\}$ for each selected data owner. Assuming the range of the dataset is known as $[\underline{d},\overline{d}]$, FIP returns $r\coloneqq \sum_{i=1}^k w_i d_i + \frac{1}{2} (\underline{d}+\overline{d})\sum_{i=k+1}^{n} w_i + Lap\left( (\overline{d}-\underline{d}) \sum_{i=k+1}^{n} w_i\right)$.

As neither FQ nor FIP collects data owners' privacy requirements, to meet the privacy requirements in median and count queries, we need to ensure that FQ does not select data owners who have \(\varepsilon_i \leq 1/(n-k)\). Also, FIP sets the privacy level $\varepsilon_i \coloneqq w_i / \sum _{i:\in \S\backslash \S_c} w_i$, which is used for both SMQ and FIP, in order to make a fair comparison between these two mechanisms. 

%We point out that in median and count query experiments, we allow FQ not to satisfy the exact privacy requirement of each data owner since FQ just guarantees an uniform privacy protection. Also, FQ or FIP does not collect data owners' privacy requirements and does not allow data owners to make explicit privacy requirements. Instead, the value of \(\varepsilon_i\) is determined by the mechanisms. $\varepsilon_i$ in FIP is determined by the weight $w_i$, i.e., $\varepsilon_i \coloneqq w_i / \sum _{i:\in \S\backslash \S_c} w_i$. To make a fair comparison between SMQ and FIP, for linear predictor experiments, we use this value for $\varepsilon_i$ in both mechanisms. Further, for median and count queries, we ensure that FQ does not select data owners who have \(\varepsilon_i \leq 1/(n-k)\). %In addition, they use marginal valuation, $\psi'_i=\psi_i/\varepsilon_i$, as well.

For each query type, we test two mechanisms under different budgets and different dependence relationships between $\vec{\theta}$ and $\vec{\varepsilon}$. Under each experiment setup, $500$ trials are carried out, and the average, the $95 \% $ CI and the RMSE for each mechanism are reported. The experiment setups are summarised in Table~\ref{tab:setups}.

\begin{table}[h] \footnotesize
 \caption{Experiment Setups}
  \label{tab:setups}
  \begin{center}
  \begin{tabular}{ll}
    \midrule
    Query $\varphi$ & Count, median, linear predictor \\
    Dataset $\vec{d}_{\ground}$ & Adults, RECS, MovieLense 1M\\
    Data valuations $\vec{\theta}$ & $\vec{\theta} \sim U(0,1)$ \\
    Privacy parameters $\vec{\varepsilon}$ & $\vec{\varepsilon} \sim U(0,1)$ \\
    Correlation $\rho$ & $\{0, -0.5, -1\}$\\
    Budget $B$ & $\{0.1 \overline{\theta}n, 0.2\overline{\theta}n, \ldots, \overline{\theta}n\}$ \\
    Mechanism & SMQ, FQ, FIP mechanisms \\
    \bottomrule
  \end{tabular}
  \end{center}
\end{table}

\section{Results}
%This section reports the experiment results on count, median and linear predictor queries.
\noindent {\bf Experiment 1: Count query.}
We apply SMQ  and FQ to count queries on the three datasets. The two mechanisms demonstrate considerably different results on their allocation and payments. While FQ tends to select data owners with the lowest privacy valuations (regardless of their privacy requirements), SMQ tends to choose those data owners who have larger $\varepsilon_i$. This is because $\theta_i^*$ is determined by $\varepsilon_i$ and are different across data owners. Furthermore, FQ compensates the selected data owners uniformly while SMQ sets a price of $p_i=\theta_i^*$ which varies among data owners.

We compare the mechanisms in terms of accuracy under different budgets and different dependence relationships between $\vec{\theta}$ and $\vec{\varepsilon}$; See Fig.~\ref{fig:count}. As $B$  increases,  CI narrows down and RMSE displays a descending trend for both SMQ and FQ. SMQ outperforms FQ  in terms of accuracy for all datasets by a large margin. The CI for SMQ is significantly narrower and the RMSE for the SMQ is significantly lower than those for FQ across all cases.

The results also show that SMQ's performance improves as a stronger dependence relationship exists, whilst FQ's performance worsens, as CI becomes narrower and the RMSE becomes smaller for SMQ  when $\rho$ decreases from $0$ to $-0.5$ and $-1$. Those for FQ show an opposite trend. When \(\rho\) becomes smaller, the negative correlation between \(\theta_i\) and \(\varepsilon_i\) is higher. In other words, when \(\theta_i\) is large, \(\varepsilon_i\) is small, which makes \(v_i\) large. As a result, under the same budget, the number $k$ is smaller and the variance $n-k$ of Laplace noise is larger, which worsens the performance of FQ.

\begin{figure*}[t]
\centering
\includegraphics[width=10 cm]{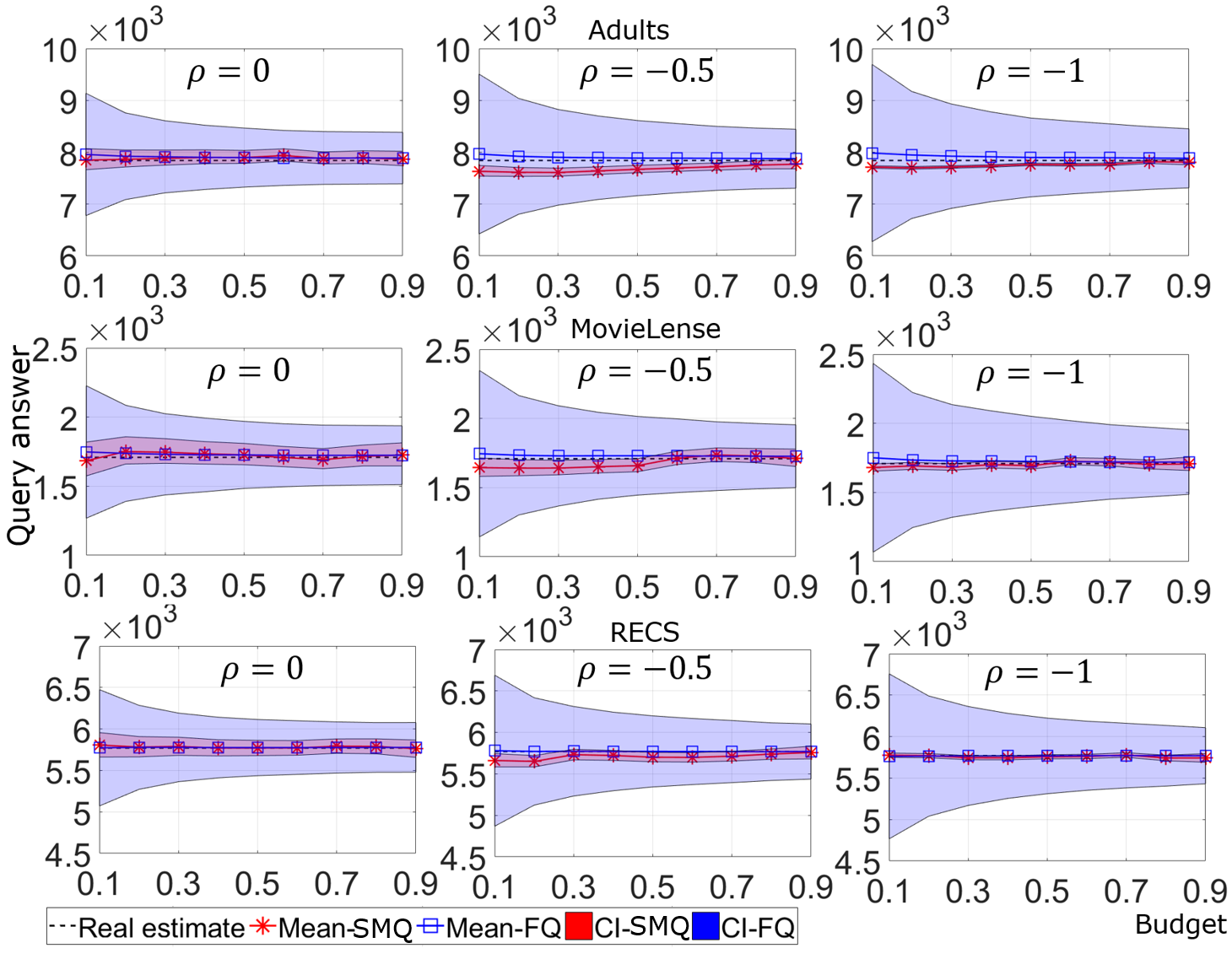}
\includegraphics[width=10 cm]{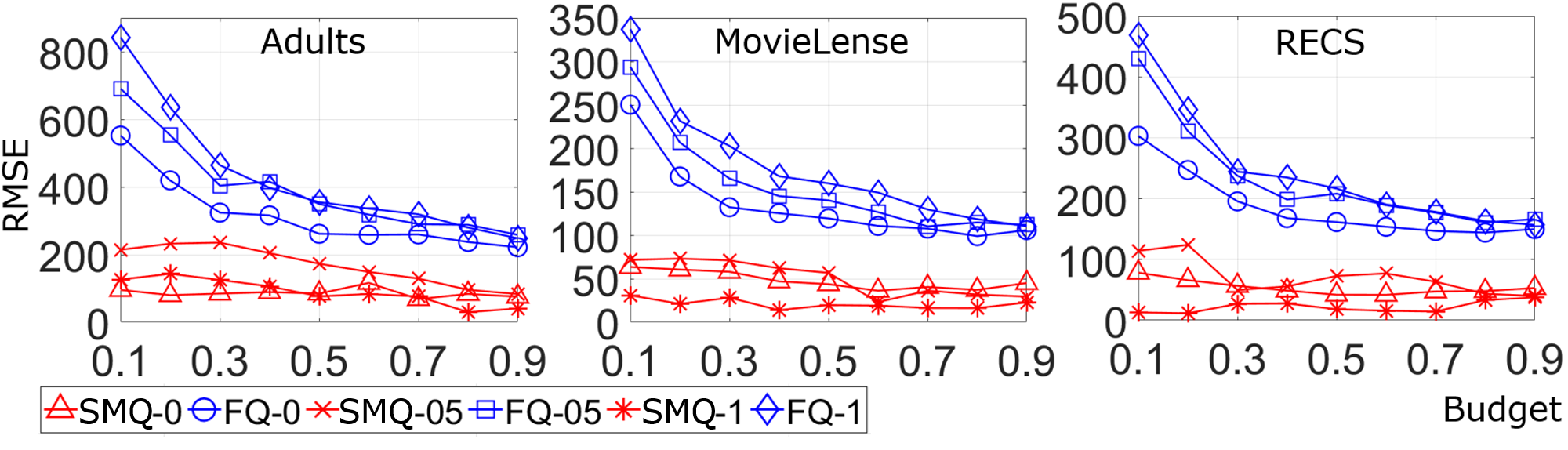}
\caption{\footnotesize CI and RMSE of SMQ and FQ for count query. The top three rows show the mean and $95 \%$ CI of the query answers of SMQ and FQ, where each row denotes a different dataset, and each column denotes a dependence relationship between $\vec{\theta}$ and $\vec{\varepsilon}$. 
The last row shows RMSE for different datasets. The horizontal axis indicates budget between $0.1\overline{\theta}n$ and $0.9\overline{\theta}n$. } 
\label{fig:count}
\end{figure*}

\noindent {\bf Experiment 2: Median query.}
We implement SMQ and FQ on median queries. 
The allocation and payment results are similar to those for count queries. In terms of the accuracy, SMQ significantly outperforms FQ for median queries. As shown in Figure \ref{fig:median}, the error for SMQ is negligible comparing to FQ. For the Adults and the MovieLense 1M datasets, the RMSE for the SMQ is zero. As for the RECS dataset, since the range is comparatively larger, the results are less accurate, but much better than the results for FQ. Also, under different value of $\rho$, SMQ  returns reliable results while FQ performs even worse when $\rho$ becomes smaller.

% median-ci
\begin{figure*}
\centering
\includegraphics[width=10 cm]{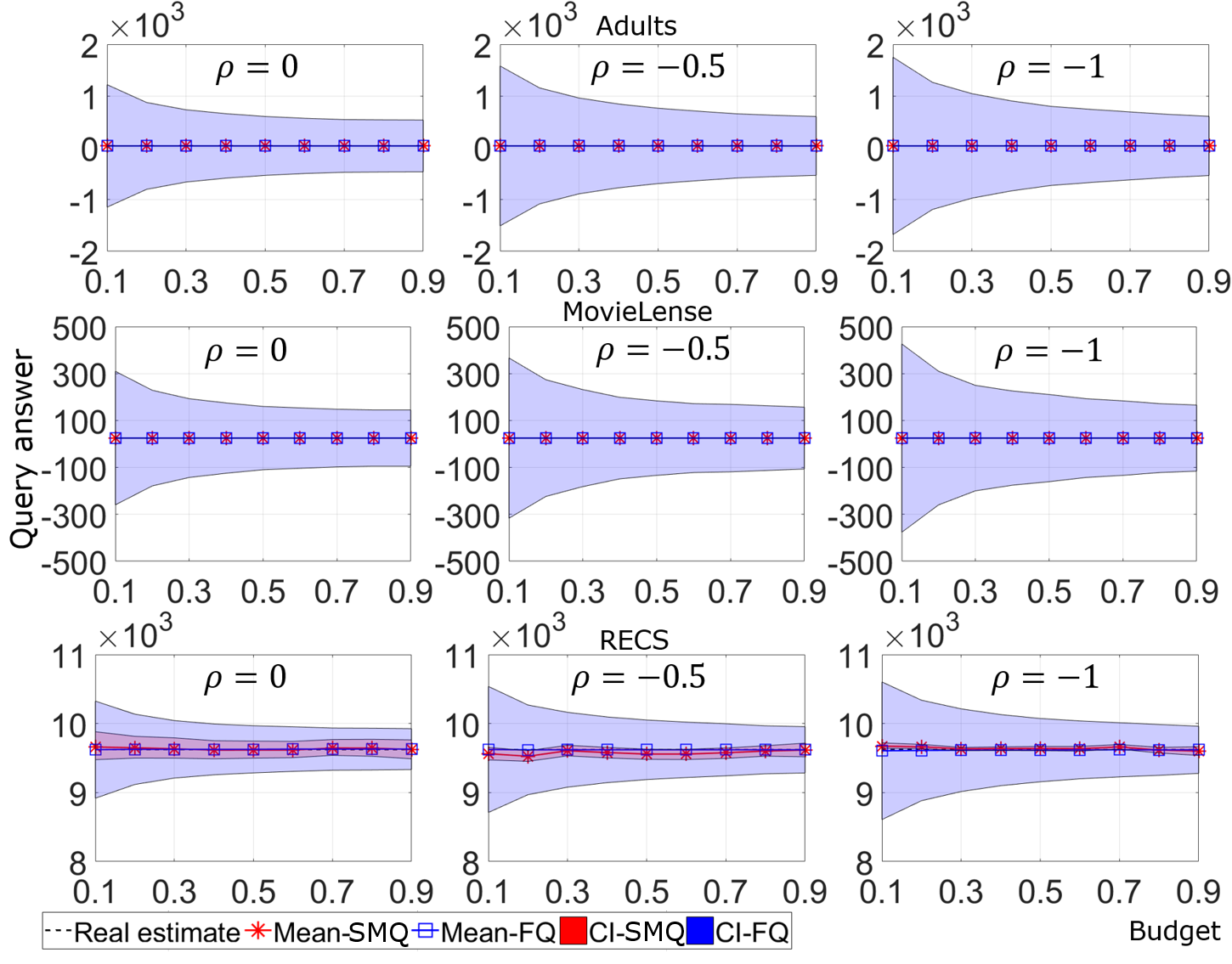}
\includegraphics[width=10 cm]{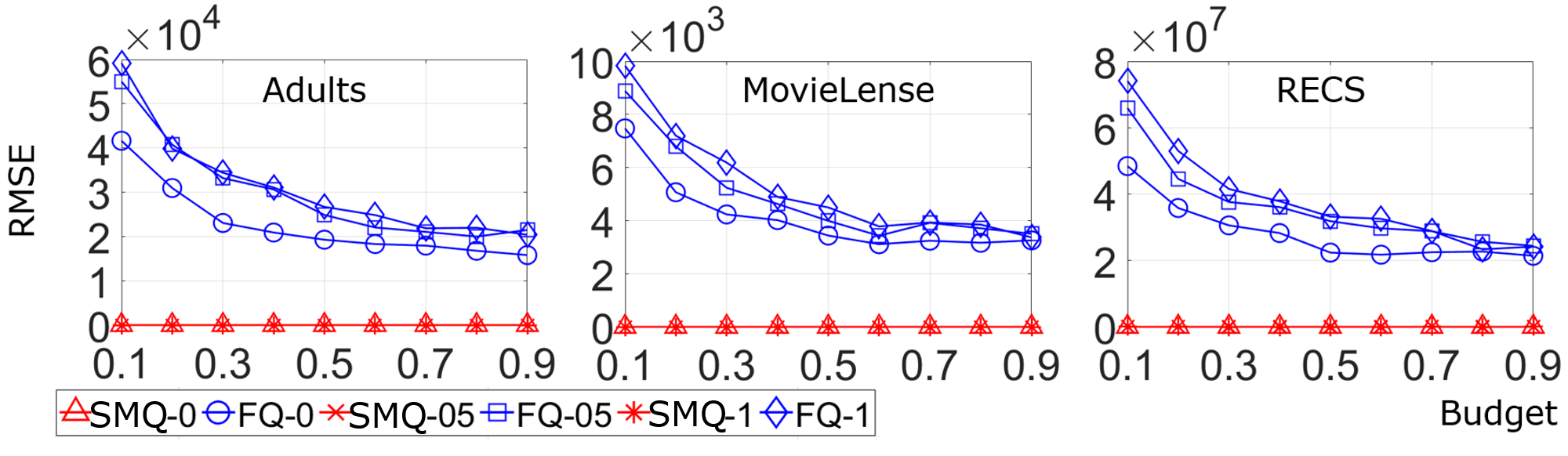}
\caption{\footnotesize CI and RMSE of SMQ and FQ for median query. All setting are the same as in Fig.~\ref{fig:count}.}
\label{fig:median}
\end{figure*}

\noindent {\bf Experiment 3: Linear predictor.}
We implement SMQ  and FIP for linear predictors. The allocation results of FIP is similar to FQ, where the data owners with low privacy valuations are chosen. The results show that for the Adult dataset, the number of data owners chosen by SMQ  is larger than that by FIP, while this number is similar for the other two datasets. %In addition, under SMQ  and FIP, the payment is different for each selected data owner and it is determined by the weight $w_i$.
Consistent with the previous experiments, SMQ  outperforms FIP in most cases with respect to accuracy, with the exception of when $B>0.5\overline{\theta} n$. As shown in Figure \ref{fig:lp}, for the Adult dataset, the CI is narrower and the RMSE is smaller for SMQ . For the MovieLense 1M dataset and RECS dataset, when the budget is low, the performance of SMQ  is better than FIP. When the the budget is high, SMQ outputs slightly less accurate results.

\begin{figure*} \centering
\includegraphics[width=10 cm]{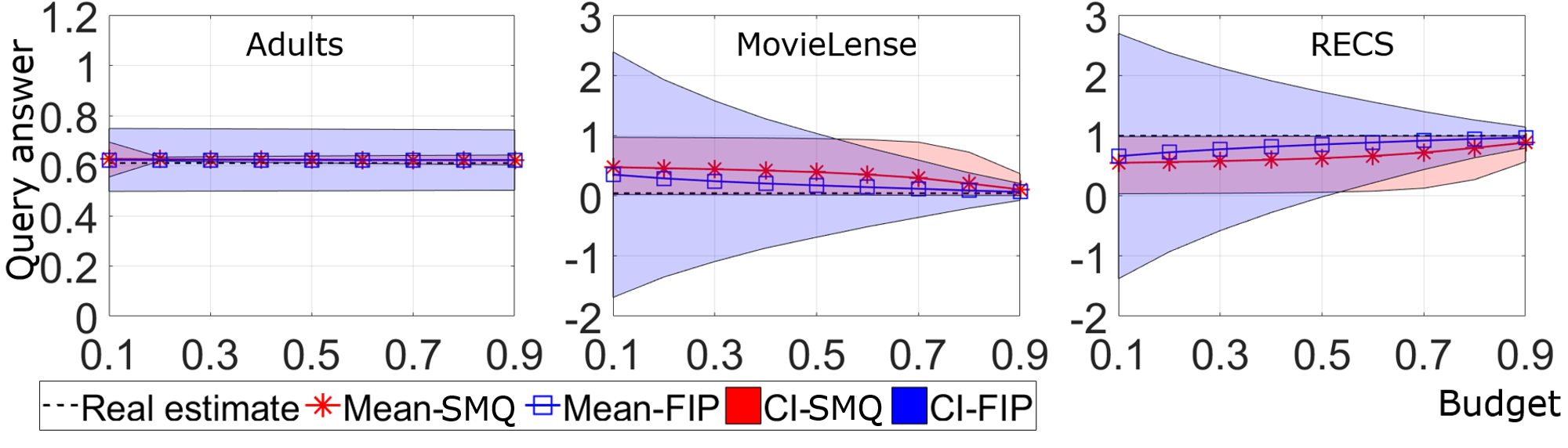}
\includegraphics[width=10 cm]{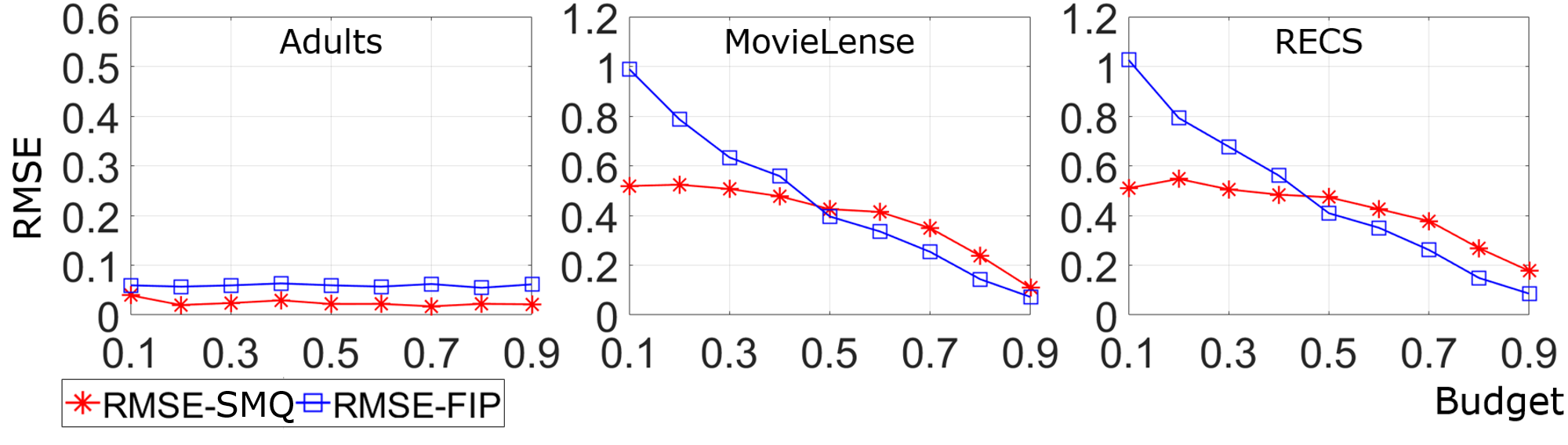}
\caption{\footnotesize CI and RMSE of SMQ and FIP for linear predictor. The top row shows the mean and $95 \%$ CI of the query answers of SMQ and FIP. The second row shows RMSE}
\label{fig:lp}
\end{figure*}

\section{Conclusion}\label{sec:conclusion} 
We consider private data query problem where the data are held by single-minded data owners. 
%We identify the trade-off between privacy protection and accuracy for three commonly-seen queries, which leads to the formalisation the PPEM principle. Based on the principle, 
We propose the data query mechanism SMQ that satisfies IC, IR, BF and $\vec{\varepsilon}$-PDP for every $i \in \{1, \ldots, n\}$.  The empirical results show that SMQ effectively improves the query accuracy with the same budget than existing mechanisms.
%This paper represents our efforts towards realising a privacy-aware data broker. We formalise the process of a data broker  as an optimisation problem. The proposed budget-constrained broker, SMQ, commits to answer a query with high accuracy through  procuring data from a group of data owners, each of whom has private data, non-observable valuation and personalised privacy protection requirement. Note only we proved that SMQ is IC, IR, BF and $\vec{\varepsilon}$-PDP, but it is optimal with respect to a specific objective function. 
%The SMQ mechanism firstly calculates a valuation threshold for each data owner. Then it chooses the data owners with valuations lower than the threshold and pays them accordingly. A $P\E$ mechanism is implemented on the dataset consisting of selected data owners' data.
%We also conducted experiments on three real-world scenarios. The results show that our query mechanism effectively improves the query accuracy with the same budget than existing mechanisms.
%
An assumption in the work assumes that all data owners truthfully announce their privacy protection requirements. As future work, we will explore the data broker problem where hidden information comes from data valuation and privacy protection requirements. We also could study the problem with more sophisticated query types, e.g., queries whose output are beyond real values.   

\bibliographystyle{apacite}
\bibliography{short}

\newpage

\appendix

{\bf Appendix A.} 

\noindent {\bf Lemma \ref{accuracy}. }
%There exists a constant $k>0$ such that  
For any integer $1\leq \alpha\leq n/4$ and $\delta\in (0,1)$, if the query mechanism $A$ is $\left(\alpha,\delta\right)$-PAC, then $\sum_{i=1}^n\varepsilon_i q_i\geq \frac{n}{4 \alpha}\cdot (\ln \delta-\ln(1-\delta))$. 

\medskip

\begin{proof} We prove the case when $\varphi$ is the count query. Recall that this case assumes that each data entry $d_i$ is a 0/1-value. 
%Denote the ground-truth dataset \(\vec{h}_{\ground}\) with binary data. 
%Let $k=4$. 
We assume for a contradiction that $\sum_{i=1}^n\varepsilon_i q_i<\frac{n(\ln \delta-\ln(1-\delta))}{4\alpha}$ and the query mechanism is $(\alpha,\delta)$-PAC. Let $R=\{r \in \R\mid |r-\varphi(\vec{d}_{\ground})|<\alpha \}$. By the definition of $(\alpha,\delta)$-PAC,  
$\Pr\left(\Phi\left(\vec{d}_\ground\right)\in R \right) \geq \delta$.

Assume, w.l.o.g., that $\varepsilon_i q_i$ are sorted in ascending order, i.e., $\varepsilon_1 q_1\leq \varepsilon_2 q_2\leq \ldots \leq \varepsilon_n q_n$. Consider the first $ 4\alpha$ data owners (Note that $4\alpha\leq n$). %By the Pigeonhole Principle, 
Clearly, 
$$\sum_{i=1}^{4\alpha} \varepsilon_i q_i < \frac{n(\ln \delta-\ln(1-\delta))}{4\alpha}\frac{4\alpha}{n}=\ln \delta -\ln (1-\delta).$$
Let $\vec{d}^0 \coloneqq (d_{i})_{i\in I_0}$ and $\vec{d}^1 \coloneqq (d_i)_{i\in I_1}$ where $I_j= \{1\leq i\leq 4\alpha\mid d_i=j\}$ for $j\in \{0,1\}$. 
Without loss of generality, assume that $|\vec{d}^0|>2\alpha$. Let $I'\subseteq I_0$ that contains exactly $2\alpha$ elements, and define a dataset $\vec{d}'\coloneqq (b_1,\ldots,b_n)$ where $b_i=1$ if $i\in I'$, and $b_i=d_i$ otherwise.  
%Let $\vec{h}^0 \in \vec{d}_s^0$ be a subset with size $2\alpha$. We change all data entries $b_i \in \vec{h}_{2 \alpha}^0$ to be $1$ and get dataset $\vec{h}'$. 
It follows that $\varphi(\vec{d}')=\varphi(\vec{d}_{\ground})+2\alpha$. 

%By the composition property of
It is straightforward to verify by definition of PDP that %\cite{jorgensen2015conservative}  
\begin{align*}
    \Pr\left(\Phi(\vec{d}')\in R\right) \geq & \exp\left(-\sum_{i\in I'}\varepsilon_i q_i\right) \Pr\left(\Phi(\vec{d}_{\ground})\in R\right)  \\
  >   & \exp\left(-(\ln\delta-\ln (1-\delta))\right)\times\delta   \\
  = & \frac{1-\delta}{\delta}\cdot \delta =1-\delta
\end{align*}
%The above can be written as $\Pr\left(|\Phi(\vec{d}')-\varphi(\vec{d}_{\ground})| \leq \alpha \right) > \frac{1}{3}$. 
Since $\varphi(\vec{d}')=\varphi(\vec{d}_{\ground})+2\alpha $, by the triangle inequality, we have 
$\Pr\left(|\Phi(\vec{d}')-\varphi(\vec{d}')| > \alpha  \right) \geq \Pr\left(|\Phi(\vec{d}')-\varphi(\vec{d}_{\ground})| < \alpha  \right) > 1-\delta$, 
which contradicts the $(\alpha,\delta)$-PAC assumption. 

%The statement holds for a general query type, linear predictor, when $k=2$. 
The proof is similar for the case when $\varphi$ is the general linear predictor where the data entries are real values. The only difference is that we define the set $I'$ as $\{1,\ldots,2\alpha\}$ and   
%Then we have $\sum_{i=1}^{\lfloor2\alpha\rfloor} \varepsilon_i q_i <\frac{n(\ln \delta-\ln(1-\delta))}{2\alpha}\frac{2\alpha}{n}=\ln \delta -\ln (1-\delta)$. Let $I'=\{1\leq i \leq \lfloor 2 \alpha \rfloor\}$. 
the dataset $\vec{d}'$ by $b_i=d_i+\frac{1}{w_i}$ for all $i\in I'$ and  $b_i=d_i$ otherwise. %The remaining part for count query is readily applicable for linear queries. 

%If we change all data entries $b_i \in \vec{d}_{2 \alpha}^0$ to be $\frac{1}{w_i}$ and get dataset $\vec{d}'$, then we have $\varphi(\vec{d}')=\varphi(\vec{d}_{\ground})+2\alpha$. 

%The statement is also applicable for other queries, such as median and linear predictor. If we change data entries of the $2\alpha $ data owners such that the distance between $\varphi(\vec{h}')$ and $\varphi(\vec{h}_{\ground})$ is $2\alpha$, then we follow the same way and we can get the same result.

For the case when $\varphi$ is a median query. Assume $d_1,d_2,\ldots,d_n$ are distinct positive integers. %($\sum_{i=1}^n\varepsilon_i q_i<\frac{n}{4\alpha}$). 
We only deal with the  case when $n$ is odd (the case when $n$ is even can be proven in a similar way). 
 Let $m$ denote the median among $d_1,\ldots, d_n$. %The median of $\vec{d}_{\ground}$ is $d_{(n+1)/2}$. 
 Let $I_0\coloneqq \{i\mid d_i<m\}$ and $I_1\coloneqq \{i\mid d_i>m\}$. Suppose, w.l.o.g., that $\sum_{i\in I_0} \varepsilon_i q_i< \frac{n(\ln \delta-\ln(1-\delta))}{8\alpha}$. 
 %Let $\vec{d}^l\coloneqq \{d_i\mid d_i>d_{(n+1)/2}\}$ and $\vec{d}^s\coloneqq \{d_i\mid d_i<d_{(n+1)/2}\}$. 
Let $k\coloneqq |\{i\mid m\leq d_i<m+2\alpha\}|$. Note that by mutual distinction of data values, $k\leq 2\alpha$. 
 For every $i\in I_0$, put $i$ into $H$ if the data owner $s_i$'s privacy requirement $\varepsilon_i$ is among the smallest $k$ among data owners in $I_0$. Clearly, $\sum_{i\in H} \varepsilon_iq_i \leq   \frac{n(\ln \delta-\ln(1-\delta))}{4\alpha}\frac{2\alpha}{n} < \ln \delta -\ln (1-\delta)$.
 Let $ d_{\max} \coloneqq \max\{d_1,\ldots,d_n\}$. Define a new dataset $\vec{d}' \coloneqq (b_1,\ldots,b_n) $ by $ b_i=d_i+d_{\max}$ if $i\in H$; and $b_i=d_i$ otherwise. It then follows that 
 the median of $\vec{d}'$ is at least $m+2\alpha$ and thus $\varphi(\vec{d}') \geq \varphi(\vec{d}_{\ground})+2\alpha$. 

By PDP of $\Phi$, we have $\Pr(|\Phi(\vec{d}')-\varphi(\vec{d}_{\ground})|<\alpha)> 1-\delta$. 
By the triangle inequality, we have 
$\Pr\left(|\Phi(\vec{d}')-\varphi(\vec{d}')| > \alpha  \right) \geq \Pr\left(|\Phi(\vec{d}')-\varphi(\vec{d}_{\ground})| < \alpha  \right) > 1-\delta$, which contradicts the accuracy assumption. 
\end{proof}

\noindent {\bf Appendix B.} 

\noindent {\bf Lemma \ref{lem:IC&IR}.}
Assuming that $\theta_i^*$ is independent from the reported valuation $\psi_i$ for all $1\leq i\leq n$, a simple direct mechanism $\Psi$ is incentive compatible and individually rational.

\bigskip

\begin{proof}
For IR, suppose $\theta_i \leq \theta_i^*$. Then $Q_i(\theta_i)=1$. By \eqref{pi}, $P_i(\theta_i)$ equals
\begin{equation}
\theta_i Q_i(\theta_i)+\int_{\theta_i}^{\overline{\theta_i}}Q_i (s) \di s = \theta_i + \int_{\theta_i}^{\theta_i^*} 1 \di s=\theta_i^*,
\end{equation}
and  $U_i(\theta_i|\theta_i)=P_i(\theta_i)-\theta_i Q_i(\theta_i)=\theta_i ^* -\theta_i \geq 0$. If $\theta_i>\theta_i^*$, $Q_i(\psi_i)=0$ which implies $P_i(\theta_i)=0$ and $U_i(\theta_i|\theta_i)=0$. In either case, the expected utility of reporting the valuation truthfully is non-negative. %Therefore, IR is satisfied.

For IC, note that $\theta_i^*$ for all $i \in \{1,\ldots, n\}$ % is determined by solving \eqref{OP2}. In other words, it
is independent from the reported valuation. When data owners report their valuations untruthfully, there are two cases:

\noindent Case (1) Suppose $s_i$ reports a valuation $\psi_i>\theta_i$.

a. if $\theta_i < \psi_i \leq \theta_i^*$,  $U_i(\psi_i|\theta_i)=U_i(\theta_i|\theta_i)=\theta_i^*-\theta_i$.

b. if $\theta_i \leq \theta_i^* < \psi_i$, $U_i(\theta_i|\theta_i)=\theta_i^*-\theta_i \geq 0=U_i(\psi_i|\theta_i)$.

c. if $\theta_i^*<\theta_i < \psi_i$, $U_i(\psi_i|\theta_i)=U_i(\theta_i|\theta_i)=0$.

\noindent Case (2) Suppose $s_i$ reports a valuation $\psi_i < \theta_i$.

a. if $\psi_i<\theta_i \leq \theta_i^*$, $U_i(\psi_i|\theta_i)=U_i(\theta_i|\theta_i)=\theta_i^*-\theta_i$.

b. if $\psi_i \leq \theta_i^* < \theta_i$, $U_i(\psi_i|\theta_i)=\theta_i^*-\theta_i<0=U_i(\theta_i|\theta_i)$.

c. if $\theta_i^*<\psi_i<\theta_i$, $U_i(\psi_i|\theta_i)=U_i(\theta_i|\theta_i)=0$.

The above argument shows that each data owner can maximise her expected utility by truthfully reporting the valuation.  
\end{proof}

\newpage

\noindent {\bf Appendix C.} 

\noindent {\bf Lemma \ref{lem:theta}}
The optimal solution to the optimisation problem \eqref{OP2} is an optimal threshold.

\begin{proof}
%According to the analysis in \cite{myerson1981optimal}, since $F_i$ is assumed to be regular, the IC and IR constraints \eqref{eqn:IC} \& \eqref{eqn:IR} are equivalent to (1) $Q_i(\psi_i)$ is increasing, and (2) $P_i(\psi_i)=\psi_i Q_i(\psi_i)+\int_{\psi_i}^{\overline{\theta}}Q_i(s)\di s$ for each $i \in \{1,\ldots,n\}$. These two conditions are guaranteed by the allocation and payment functions (\ref{qi}) and (\ref{pi}), respectively. Therefore, the IC and IR constraints are satisfied. 
%Hence, the constraints \eqref{eqn:IC} \& \eqref{eqn:IR} are guaranteed by the allocation and payment functions (\ref{qi}) and (\ref{pi}). 
%We first show that $\Psi$ satisfies IC and IR. 
%and BF. BF can be guaranteed by the first constraint of Problem (\ref{OP2}).
%\smallskip

Firstly, since the threshold $\theta^*_i$ is determined by solving \eqref{OP2}, it is independent from $\psi_i$. By Lemma~\ref{lem:IC&IR},  IC and IR constraints are satisfied by allocation rule \eqref{qi} and payment rule \eqref{pi}.

For the objective function, by substituting (\ref{Qi}) the objective function becomes $\sum_{i=1}^n \int_{\underline{\theta}}^{\overline{\theta}}
\varepsilon_i Q_i (\psi_i) f_i(\psi_i) \di\psi_i$, which, by \eqref{qi}, is $$\sum_{i=1}^n \int_{\underline{\theta}}^{\theta_i^*}
\varepsilon_i f_i(\psi_i) \di\psi_i=\sum_{i=1}^n \varepsilon_i F_i (\theta_i^*).$$ For BF, by \eqref{Pi} the left hand side of the constraint \eqref{eqn:BF} is
\begin{align*}
%&\sum_{i=1}^n\int_{\Theta^n} p_i(\vec{\psi}) f_i (\vec{\psi})\di\psi&\\
&\sum_{i=1}^n \int_{\underline{\theta}}^{\overline{\theta}} P_i (\psi_i) f_i(\psi_i) \di \psi_i\\
=&\sum_{i=1}^n \int_{\underline{\theta}}^{\overline{\theta}} \left(\psi_i Q_i(\psi_i)+\int_{\psi_i}^{\overline{\theta}}Q_i(s) \di s\right)f_i(\psi_i) \di \psi_i& \text{by \eqref{pi}}\\
=&\sum_{i=1}^n \int_{\underline{\theta}}^{\theta_i^*} \theta_i^* f_i(\psi_i) \di \psi_i=\sum_{i=1}^n\theta_i^* F_i(\theta_i^*)&
\end{align*}
Thus \eqref{eqn:BF} is equivalent to $\sum_{i=1}^n\theta_i^* F_i(\theta_i^*)\leq B$. 
%by substituting (\ref{Pi}), (\ref{eqn:BF}) becomes $$\sum_{i=1}^n \int_{\underline{\theta}}^{\overline{\theta}} P_i (\psi_i) f_i(\psi) \di \psi_i \leq B.$$ We then insert (\ref{pi}) and get $\sum_{i=1}^n \int_{\underline{\theta}}^{\overline{\theta}} (\psi_i Q_i(\psi_i)+\int_{\psi_i}^{\overline{\theta}}Q_i(s) \di s)f_i(\psi) \di \psi_i =\sum_{i=1}^n \int_{\underline{\theta}}^{\theta_i^*} (\psi_i +\int_{\psi_i}^{\theta^*} \di s)f_i(\psi) \di\psi_i =\sum_{i=1}^n \int_{\underline{\theta}}^{\theta_i^*} \theta_i^* f_i(\psi) \di \psi_i = \theta_i^* F(\theta_i^*)\leq B$. 
Moreover, it is easy to see that (\ref{eqn:BF}) is binding, i.e., $\sum_{i=1}^n\theta_i^* F_i(\theta_i^*)=B$. Otherwise, we can always increase the value of $\theta_i^*$ and select more data owners.
\end{proof}

\noindent {\bf Appendix D.}

\noindent {\bf Theorem \ref{Thr.IC}.}
The procurement mechanism $\Psi$ guarantees to find the optimal solution of Problem \eqref{OP1}. 

\smallskip

%The properties of IC and IR can be guaranteed by Lemma \ref{lem:theta} while BF is satisfied by the BF constraint in \eqref{OP2}.
\begin{proof}
By Lemma~\ref{lem:theta}, we only need to show that the procurement mechanism $\Psi$ solves Problem \eqref{OP2}.
%For optimality, %by Lemma \ref{lem:theta}, Problem \eqref{OP1} is transformed to Problem \eqref{OP2}. 
%The proof regarding the convexity of Problem \eqref{OP2} is as follows. We 
Define $B_i$ as $\theta_i^*F_i(\theta_i^*)$. The first constraint in \eqref{OP2} then becomes \(\sum_{i=1}^n B_i =B\), which is affine in terms of $B_i$. 

Also, since any \(B_i\) corresponds to a \(\theta_i^*\), we can view \(\theta_i^*\) as a function of \(B_i\) and thus write \(B_i=\theta_i^*(B_i)F_i(\theta_i^*(B_i))\). The derivative in terms of \(B_i\) is \[1=\theta_i^{*'}(B_i)F_i(\theta_i^*(B_i))+\theta_i(B_i)^*f_i(\theta_i^*(B_i))\theta_i^{* '}(B_i)\]
Reorganise the equation, we can get $$f_i(\theta_i^*)\theta_i^{*'}=\frac{1}{\frac{F_i(\theta_i^*)}{f_i(\theta_i^*)}+\theta_i^*}.$$ 
Because of the regularity assumption, the denominator is strictly increasing. Thus, \(f_i(\theta_i^*) \theta_i^{*'}\) is strictly decreasing. Furthermore, the derivative of the objective function in terms of \(B_i\) is $$\sum_{i=1}^n \varepsilon_i f_i(\theta_i^*(B_i))\theta_i^{*'}(B_i).$$ It is strictly decreasing as well. Therefore, the objective is to maximise a concave function. The above arguments asserts the convexity of  Problem \eqref{OP2}.

Since Problem \eqref{OP2} is  convex and the vector $\vec{\theta^*}$ satisfies conditions \eqref{eqn:first order} and \eqref{KKT}, Karush-Kuhn-Tucker theorem (see \cite{luenberger1997optimization}) implies that $\vec{\theta^*}$ is the optimal solution to \eqref{OP2}. 
%that $\vec{\theta^*}$ obtained in Alg.\ref{alg:PM} is optimal for \eqref{OP2} when the distribution $F_i$ is regular, which holds by assumption (A2). 
%The above argument claims that the threshold vector $\vec{\theta^*}$ is optimal to \eqref{OP2}. Also, Lemma~\ref{lem:theta} guarantees that these thresholds are also optimal for \eqref{OP1}. %By Lemma~\ref{lem:knapsack solution} %\eqref{eqn:payment}, and the regularity assumption of $F_i$,
%As a consequence, the allocation and payment functions defined in Alg.\ref{alg:PM} are optimal for \eqref{OP1}. 
\end{proof}
\end{document}